\newtheorem{theorem}{Theorem}
\newtheorem{lemma}{Lemma}
\newtheorem{definition}{Definition}
\newtheorem{corollary}{Corollary}
\newcommand{\Inc}[1]{\mathrm{IncP}_{#1}}
\newcommand{\UInc}[1]{\mathrm{UsualInc}_{#1}}
\newcommand{\WInc}[1]{\mathrm{WCInc}_{#1}}
\newcommand{\AmInc}[1]{\mathrm{AmInc}_{#1}}
\newcommand{\N}{\mathbb{N}}
\newcommand{\enum}[1]{\Pi_{#1}}
\newcommand{\EnumP}{\mathrm{ EnumP}}
\newcommand{\OutputP}{\mathrm{ OutputP}}
\newcommand{\IncP}{\mathrm{IncP}}
\newcommand{\DelayP}{\mathrm{DelayP}}
\newcommand{\AmDelayP}{\mathrm{AmDelayP}}
\newcommand{\SDelayP}{\mathrm{SDelayP}}
\newcommand{\timev}[3][]{\ifthenelse{\isempty{#2}}{t[#1]}{\ifthenelse{\isempty{#1}}{t_{#2}^{#3}}{t_{#2}^{#3}[#1]}}}
\renewcommand{\th}{\text{th}}
\newcommand{\Output}{\mathtt{OUTPUT}}
\newcommand{\test}[1]{\textsc{Check}\smash{\cdot}#1}
\newcommand{\ext}[1]{\textsc{ExtSol}\smash{\cdot}#1}
\newcommand{\ramload}{\mathsf{load}}
\newcommand{\ramstep}{\mathsf{move}}
\newcommand{\move}{\mathsf{move}}
\newcommand{\steps}{\mathsf{steps}}
\newcommand{\ramcopy}{\mathsf{copy}}
\NewDocumentCommand{\bbload}{O{\abbox}}{\ensuremath{\mathsf{load}(#1)}}
\NewDocumentCommand{\bbstep}{O{M}}{\mathsf{move}(#1)}
\NewDocumentCommand{\bbsteps}{O{M}}{\mathsf{steps}(#1)}
\NewDocumentCommand{\bbisoutput}{O{M}}{\mathsf{output_?}(#1)}
\NewDocumentCommand{\bboutput}{O{M}}{\mathsf{solution}(#1)}
\NewDocumentCommand{\bbcopy}{O{M}}{\mathsf{copy}(#1)}
\NewDocumentCommand{\bbam}{O{\abbox}}{{\overline{\mathsf{amdelay}}}({#1})}
\NewDocumentCommand{\bbsol}{O{\abbox}}{\overline{\sharp #1}}
\NewDocumentCommand{\bbsize}{O{\abbox}}{\overline{\mathsf{bitsize}}({#1})}
\NewDocumentCommand{\bsize}{O{\abbox}}{{\mathsf{bitsize}}({#1})}
\NewDocumentCommand{\adv}{O{\sigma} O{D} O{S}}{\mathcal{A}_{#1}(#2, #3)}
\NewDocumentCommand{\ladv}{O{\sigma}}{\mathcal{A}_{#1}}
\NewDocumentCommand{\bbspace}{O{\abbox}}{{\mathsf{space}({#1})}}
\newcommand{\inc}{\mathsf{inc}}
\newcommand{\mbit}{\mathsf{mbit}}
\newcommand{\queue}{\mathsf{queue}}
\newcommand{\queueinsert}{\mathsf{insert}}
\newcommand{\pull}{\mathsf{pull}}
\newcommand{\push}{\mathsf{push}}
\newcommand{\pop}{\mathsf{pop}}
\newcommand{\out}{\mathsf{output}}
\newcommand{\length}{\mathsf{length}}
\newcommand{\Counter}{\mathsf{Counter}}
\NewDocumentCommand{\abbox}{}{\ensuremath{\mathcal{I}}}
\NewDocumentCommand{\calM}{}{\ensuremath{\mathcal{M}}}
\NewDocumentCommand{\delay}{O{M} O{}}{
  \ifthenelse{\isempty{#2}}
  {\ensuremath{\mathsf{delay}(#1)}}
  {\ensuremath{\mathsf{delay}(#1,#2)}}
}
\NewDocumentCommand{\amdelay}{O{M} O{}}{\ensuremath{\mathsf{am}\delay[#1][#2]}}
\NewDocumentCommand{\avdelay}{O{M} O{}}{\ensuremath{\mathsf{avg}\delay[#1][#2]}}
\title{From Amortized to Worst Case Delay in Enumeration Algorithms} 
\author[1]{Florent Capelli}
\affil[1]{Univ. Artois, CNRS, UMR 8188, Centre de Recherche en Informatique de Lens (CRIL), F-62300 Lens, France}
\author[2]{Yann Strozecki}
\affil[2]{Université Paris Saclay, UVSQ, DAVID, France}
\date{}
\newtheorem{openproblem}{\bf Open problem}
\begin{document}
\maketitle

\begin{abstract}
The quality of enumeration algorithms is often measured by their delay, that is, the maximal time spent between the output of two distinct solutions. If the goal is to enumerate $t$ distinct solutions for any given $t$, then another relevant measure is the maximal time needed to output $t$ solutions divided by $t$, a notion we call the \emph{amortized delay} of the algorithm, since it can be seen as the amortized complexity of the problem of enumerating $t$ elements in the set. In this paper, we study the relation between these two notions of delay, showing different schemes allowing one to transform an algorithm with polynomial amortized delay for which one has a blackbox access into an algorithm with polynomial delay. We complement our results by providing several lower bounds and impossibility theorems in the blackbox model.



\end{abstract}

\newpage

\section{Introduction}

An enumeration problem is the task of listing a set of elements without redundancies. It is an important and old class of problems: the Baguenaudier game~\cite{lucas1882recreations} from the $19$th century can be seen as the problem of enumerating integers in Gray code order. Ruskey even reports~\cite{ruskey2003combinatorial} on thousand-year-old methods to list simple combinatorial structures such as the subsets or the partitions of a finite set. Modern enumeration algorithms date back to the $1970$s with algorithms computing circuits or spanning trees of a graph~\cite{tiernan1970efficient,read1975bounds}, while fundamental complexity notions for enumeration have been formalized $30$ years ago by Johnson, Yannakakis and Papadimitriou~\cite{johnson1988generating}.  
The main specificity of enumeration problems is that the size of the enumerated set is typically exponential in the size of the input.
Hence, a problem is considered tractable and said to be \emph{output polynomial} when it can be solved in time polynomial in the size of the \emph{input and the output}.
 This measure is relevant when one wants to generate and store all elements of a set, for instance to build a library of objects later to be analyzed by experts, as it is done in biology, chemistry, or network analytics~\cite{andrade2016enumeration,barth2015efficient,bohmova2018computing}.

 For most problems, the set to enumerate is too large, or may not be needed in its entirety. It is then desirable to efficiently generate a part of the set for statistical analysis or on the fly processing. In this case, a more relevant measure of the complexity and hence of the quality of the enumeration algorithm is its \emph{delay}, that is, the time spent between two consecutive outputs. One prominent focus has been to design algorithms whose delay is bounded by a polynomial in the size of the input. Problems admitting such algorithms constitute the class $\DelayP$ and many problems are in this class, for example the enumeration of the maximal independent sets of a graph~\cite{johnson1988generating}, or answer tuples of restricted database queries~\cite{DBLP:journals/tocl/DurandG07} (see~\cite{EnumOfEnum} for many more examples). 

 It also happens that new elements of the output set, also called solutions, become increasingly difficult to find. In this case, polynomial delay is usually out of reach but one may still design algorithms with \emph{polynomial incremental time}. An algorithm is in  polynomial incremental time if for every $i$, the delay between the output of the $i^\th$ and the $(i+1)^{\mathsf{st}}$ solution is polynomial in $i$ and in the size of the input. Such algorithms naturally exist for saturation problems: given a set of elements and a polynomial time function acting on tuples of elements, produce the closure of the set by the function. One can generate such a closure by iteratively applying the function until no new element is found. As the set grows bigger, finding new elements becomes harder. 
 The best algorithm to generate circuits of a matroid uses a closure property of the circuits~\cite{khachiyan2005complexity} and is thus in polynomial incremental time. The fundamental problem of generating the minimal transversals of a hypergraph can also be solved in quasi-polynomial incremental time~\cite{fredman1996complexity, blasius2019efficiently} and some of its restrictions in polynomial incremental time~\cite{eiter2003new}. 

 While the delay and incremental delay are natural ways of measuring the quality of an enumeration algorithm, it might sometimes be too strong of a restriction. Indeed, if the enumeration algorithm is used to generate a subset of the solutions, it is often enough to have guarantees that the time needed to generate $i$ solutions is reasonable for every $i$. For example, one could be satisfied with an algorithm that has the property that after a time $i\cdot p(n)$, it has output at least $i$ solutions, where $p$ is a polynomial and $n$ the input size. In this case, drawing parallel with the complexity analysis of data structures, we will say that $p(n)$ is the \emph{amortized delay} of the algorithm\footnote{The conference version~\cite{CapelliS23} of this paper refered to \emph{linear incremental delay}. We decided to change to a more explicit name despite possible conflicts in the existing literature, see \Cref{sec:preliminaries} for more details.}.

While polynomial delay enumerator also have polynomial amortized delay, the converse is not true, as the delay implicitly enforces some kind of regularity in the way solutions are output. Take for example an algorithm that, on an input of size $n$, outputs $2^n$ solutions in $2^n$ steps, then nothing for $2^n$ steps and finally outputs the last solution. It can be readily verified that this algorithm outputs at least $i$ solutions after $2i$ steps for every $i \leq 2^n+1$, that is, this algorithm has amortized delay $2$. However, the delay of such an algorithm is not polynomial as the time spent between the output of the last two solutions is $2^n$. Instead of executing the output instruction of this algorithm, one could store the solutions that are found in a queue. Then, every two steps of the original algorithm, one solution is removed from the queue and output. The amortized delay being $2$ ensures that the queue is never empty when dequeued and we now have polynomial delay. Intuitively, the solutions being dense in the beginning, they are used to pave the gap until the last solution is found. While this strategy may always be applied to turn an enumerator with polynomial amortized delay into a polynomial delay algorithm, the size of the queue may become exponential in the size of the input. In the above example, after the simulation of $2^n$ steps, $2^n$ solutions have been pushed into the queue but only $2^{n-1}$ of them have been dequeued. Hence the queue contains $2^{n-1}$ solutions. 
Unfortunately, algorithms that require exponential space are impractical for real-world applications. Consequently, significant effort has been invested in developing polynomial-delay methods with polynomial space~\cite{lawler1980generating,avis1996reverse,cohen2008generating,conte2019new,DBLP:conf/icalp/BrosseLM22} and even to obtain a space complexity sublinear in the input size, such as in the generation of maximal cliques within massive networks~\cite{conte2016sublinear}.

\subparagraph{Contributions.} In this paper, we conduct an in-depth study of the following problem: can any algorithm with amortized delay $p(n)$ be transformed into an algorithm with delay $d(n)$ that is polynomial in $p(n)$? We call such a transformation a \emph{regularization scheme}, because we go from an algorithm that may have large gaps between outputing two distinct solutions to an algorithm with a guaranteed delay. From the discussion above, it is clear that regularization is always possible but it comes at an exponential price in the memory footprint. Moreover, the naive regularization scheme sketched above works only if the amortized delay $p(n)$ of the original algorithm is known, or if at least an upper bound on it is available. In this paper, we explore  more efficient regularization schemes and their limitations. Our first contribution is a regularization scheme similar to the naive one but where the knowledge of $p(n)$ is not necessary. We also observe that the naive method can be made to work with polynomial space in some particular cases and apply it to the problem of enumerating the models of a DNF formula. 

Our second and main contribution is the design of a regularization scheme that uses only a polynomial amount of memory. In other words, we show that the class $\DelayP^\poly$---problems solvable by a polynomial delay and polynomial space algorithm---is the same as the class $\AmDelayP^\poly$---problems solvable by a polynomial space polynomial amortized delay algorithm. In other words, we prove $\DelayP^\poly = \AmDelayP^\poly$, answering positively a question we raised in~\cite{capelli2019incremental} and where only special cases were proven. This finding indicates that, from a purely theoretical standpoint, polynomial amortized delay and polynomial delay are actually the same notions, in the sense that they define the same class of ``easy'' problems which came as a surprise to us. 
Our result relies on a constructive method that we call \emph{geometric regularization}. Our first description of geometric regularization only works when it is given an upper bound on the amortized delay and on the number of solutions of the original algorithm. We later lift these requirements by presenting improved versions that can adapt to unknown solutions number or amortized delay.

The main drawback of geometric regularization is that it modifies the output order. While we propose a way of circumventing this limitation, we complement our algorithmic results by lower bounds, showing that this is very likely unavoidable. In particular, we show that any regularization scheme that preserves the order of solutions and only have a blackbox access to the original algorithm must either have exponential delay or use exponential space. We also show that regularization schemes cannot have a delay that is linear in the amortized delay of the original algorithm (unless the amortized delay is given explicitly as input), showing that the delay obtained by our adaptative schemes is nearly optimal in the blackbox model.

\subparagraph{Related work.} The notion of polynomial amortized delay is natural enough to have appeared before in the literature. In her PhD thesis~\cite{Goldberg91}, Goldberg introduced the notion of \emph{polynomial cumulative delay}, which exactly corresponds to our notion of polynomial amortized delay. We however decided to stick to the terminology of~\cite{capelli2019incremental}. Goldberg mentions on page 10 that one can turn a polynomial amortized algorithm into a polynomial delay algorithm but at the price of exponential space. She argues that one would probably prefer in practice polynomial amortized delay and polynomial space over polynomial delay and exponential space. Interestingly, she also designs for every constant $k$, an algorithm with polynomial amortized delay using polynomial space to enumerate on input $n$, every graph that is $k$-colorable (Theorem 15 on page 112). She leaves open the question of designing a polynomial delay and polynomial space algorithm for this problem, which now comes as a corollary of our theorem.

In~\cite{TGR22}, Tziavelis, Gatterbauer and Riedewald introduce the notion of \emph{Time-To-$k$} to describe the time needed to output the $k$ best answers of a database query for every $k$. Graph problems have also been studied with the same complexity measure, under the name $k$-best enumeration~\cite{eppstein2015k}.
They design algorithms having a Time-To-$k$ complexity of the form $\poly(n)k$ where $n$ is the size of the input, which hence corresponds to our notion of amortized delay. They argue that delay is sufficient but not necessary to get good Time-to-$k$ complexity and they argue that in practice, having small Time-to-$k$ complexity is better than having small delay. Observe however that in their case, our method does not apply well since they are interested in the \emph{best} answers, meaning that the order is important in this context. Our method does not preserve order. 


\subparagraph{Organization of the paper.}

In \Cref{sec:preliminaries}, we introduce enumeration problems, the related computational model, enumeration processes, and complexity measures. \Cref{sec:bufferamortization} presents buffering techniques to regularize the delay, which requires an exponential size memory. We present the folklore algorithm, and a new analysis of its complexity when taking the size of the solutions output into account. We also present a new variation of this method that works without prior knowledge of the amortized delay. We show that on a large class of problems, known as self-reducible problems, one can make these techniques more efficient with only a polynomial space overhead. \Cref{sec:mainalgo} shows how to transform the amortized delay of an enumeration process $\abbox$ into guaranteed worst case delay with additional memory logarithmic in the number of solutions, using a technique we call geometric regularization. Interactive visualization of how geometric regularization works can be found at \url{http://florent.capelli.me/coussinet/}. In \Cref{sec:geometric_amortization_unknown_delay}, we propose a variant of geometric regularization wich works without prior knowledge of the amortized delay, inspired by the method we previously designed for the buffering technique. 
In \Cref{sec:Inci}, we generalize geometric regularization to amortized incremental delay algorithms.  \Cref{sec:lower-bounds} proves lower bounds establishing the optimality of some of our results in the blackbox oracle model. In particular, we show that a regularization scheme without knowledge of the amortized delay $d$ of an enumeration process cannot have the optimal worst-case delay $O(d)$. Moreover, we show that an order preserving regularization scheme must have the product of its space and delay exponential in the size of the enumerated solutions. To outline the main ideas of our algorithms, they are presented using pseudocode with instructions to run an enumeration process. This enumeration process is given by a Random Access Machin (RAM) that must be simulated. The details on the complexity of implementing these instructions with constant time and space overhead are given in \Cref{sec:oracles}. 


\section{Preliminaries}
\label{sec:preliminaries}

\subparagraph{Enumeration problems.}

Let $\Sigma$ be a finite alphabet and $\Sigma^*$ be the set of finite words built on $\Sigma$.
We denote by $|x|$ the length of $x \in \Sigma^*$.
Let $A\subseteq \Sigma^{*}\times\Sigma^{*}$ be a binary predicat. We write $A(x)$ for the set of $y$ such that $A(x,y)$ holds. The enumeration problem $\enum{A}$ is the function which associates $A(x)$ to $x$. The element $x$ is called the \emph{instance} or the \emph{input}, while an element of $A(x)$ is called a \emph{solution}. We denote the cardinality of a set $A(x)$ by $\sharp A(x)$.

A predicate $A$ is said to be \emph{polynomially balanced} if for all $y \in A(x)$, $|y|$ is polynomial in $|x|$. It implies that $\sharp A(x)$ is bounded by $|\Sigma|^{\poly(|x|)}$. Let $\test{A}$ be the problem of deciding, given $x$ and $y$, whether $y \in A(x)$. The class $\EnumP$, a natural analogous to $\NP$ for enumeration, is defined to be the set of all problems $\enum{A}$  where $A$ is polynomially balanced and $\test{A} \in \P$. More details can be found in~\cite{capelli2019incremental,strozecki2019enumeration}.

\subparagraph{Computational model.}  

In this paper, we use the Random Access Machine (RAM) model introduced by Cook and Reckhow~\cite{cook1973time} with comparison, addition, subtraction and multiplication as its basic arithmetic operations augmented with an $\Output(i,j)$ operation which outputs the content of the values of registers $R_i, R_{i+1}, \dots, R_j$ as in~\cite{Bagan09,phdstrozecki} to capture enumeration problems. Our model is a hybrid between \emph{uniform cost model} and \emph{logarithmic cost model} (see~\cite{cook1973time,aho1974design}): addition, multiplication and comparison are in constant time on values less than $n$, where $n$ is the size of the input. In first-order query problems, it is justified by bounding the values in the registers by $n$ times a constant~\cite{DBLP:journals/tocl/DurandG07,Bagan09}. However, for general enumeration algorithms which may store and access $2^{n}$ solutions, dealing with large integers is necessary. Instead of bounding the register size, we define the cost of an instruction to be the sum of the size of its arguments \emph{divided by $\log(n)$}. Thus, any operation on a value polynomial in $n$ can be done in constant time, but unlike in the usual uniform cost model, we account for the cost of handling superpolynomial values.

We assume the $\Output(i,j)$ operation has constant cost, allowing algorithms where the delay between outputs is less than the size of the output. For example, the next solution can be obtained by only changing a few bits of the current one, as in Gray codes for enumerating integers. In this case, the time to actually write the output is greater than the time to generate it. To reflect this complexity, we consider the output operation to be free. However, as we shall see, when we use other oracles to interact with the output solutions, then we will consider that reading this solution depends on its size.

A RAM $M$ on input $x \in \Sigma^{*}$ produces a sequence of outputs $y_{1}, \dots, y_{s}$. The set of outputs of $M$ on input $x$ is denoted by $M(x)$ and its cardinality by $\sharp M(x)$. We say that $M$ solves $\enum{A}$ if, on every input $x \in \Sigma^{*}$, $ A(x) = M(x) $ and for all $i\neq j$ we have $y_{i} \neq y_{j}$, that is \emph{no solution is repeated}. All registers are initialized with zero. The space used by $M$ is the sum of the bitsizes of the integers stored in its registers, up to the register of the largest index accessed.

\subparagraph{Enumeration process and delay.} To measure the quality of enumeration of algorithm, we can consider the \emph{total time} it takes to enumerate all solutions.  Since the number of solutions can be exponential in the \emph{input} size, it is more relevant to give the total time as a function of the combined size of the \emph{input and output}.  However, this notion does not capture the dynamic nature of an enumeration algorithm. When generating all solutions already takes too long, we want to be able to generate at least some solutions. Hence, we should measure (and bound) the total time used to produce a given number of solutions. To this end, we introduce a few notions of delay. It is useful to introduce these notions for a process that does not depend on an input $x$, so we start by abstracting the computation of an enumeration algorithm into the following setting.

An \emph{enumeration process $\abbox$} for a set $S$ is a mapping $\N \rightarrow S \cup \{\bot, \square\}$ where for every $s \in S$, there is a unique $t \in \N$ such that $\abbox(t) = s$. Moreover, there exists $t_0$ such that $\abbox(t) = \square$ if and only if  $t \geq t_0$. A RAM machine $M$ on input $x \in \Sigma^*$ naturally induces an enumeration process $\abbox$ for $M(x)$, where $\abbox(t) = \bot$ if no solution is output at time $t$ by $M$ on input $x$, $\abbox(t)=s$ if solution $s$ is output at time $t$ and $\abbox(t) = \square$ if $M$ has stopped before time $t$ on input $x$. If $\abbox(t) = s \in S$, we say that $\abbox$ outputs $s$ at time $t$. 

Given an enumeration process $\abbox$ for a set $S$ of size $m$, we denote by $T_\abbox(i)$ the time where $\abbox$ outputs the $i^\th$ element of $S$. By convention, we let $T_\abbox(0)=0$. When the process is clear from the context, we drop the subscript $\abbox$ and write $T(i)$. The \emph{delay} of $\abbox$, denoted as $\delay[\abbox]$, is defined as $\max_{0 \leq i < |S|}{T(i+1)-T(i)}$, that is, the delay of $\abbox$  is the maximal time one has to wait between two outputs.

In this paper, we are also interested in other notions of delay. The central notion of delay we will be using is the notion of \emph{amortized delay}\footnote{In the conference version of this paper~\cite{CapelliS23}, this notion was called \emph{linear incremental delay}. We decided to change the name to amortized delay to give a better intuition on what it is and match the terminology of complexity for data structures (see below).}. The \emph{amortized delay} of $\abbox$, denoted as $\amdelay[\abbox]$, is defined as $\max_{1 \leq i \leq |S|} {T(i)\over i}$. Amortized delay is an interesting notion because it can be used to bound the time needed to output $i$ solutions for every $i$. Indeed, if we know that $\amdelay[\abbox] \leq d$, then we know that $\abbox$ outputs at least $i$ solutions after $i \cdot d$ steps. 

  The \emph{average delay} of $\abbox$, denoted as $\avdelay[\abbox]$, is defined as $T(|S|) \over |S|$. Average delay does not say anything on the dynamics of the computation and an algorithm can have a small average delay while the time needed to find a solution is extremely high, for example, in the case where all solutions are output at the very end of the computation.
  Previous work sometimes uses the term amortized delay to define what we call average delay in this paper~\cite{ferreira2014amortized}. In this paper, we decided to differentiate both notions at the risk of having a small confusion with earlier work. We believe this terminology to be the best one as it nicely connects with classical notions in data structures. Indeed, seeing an enumeration algorithm as some sort of process that extract solutions from a data structure, the delay corresponds to the worst case complexity of extracting one element from the data structure, while the amortized delay corresponds to the amortized complexity of extracting $i$ solutions from the data structure. In this paper, we sometimes use the term ``worst-case delay'' in place of ``delay'' to be even closer to this terminology.

  The three notions are related by the following inequality:
  \begin{theorem}
    \label{thm:cmpdelay}
    For every enumeration process $\abbox$, we have $\avdelay[\abbox] \leq \amdelay[\abbox] \leq \delay[\abbox]$.
  \end{theorem}
  \begin{proof}
    The first inequality is trivial since the value $T(|S|)\over |S|$ is clearly smaller than $\max_{1 \leq i \leq |S|} {T(i)\over i}$. The second inequality comes from the following calculation for every $i$:

    \begin{align*}
      T(i) & = \sum_{j=0}^{i-1} T(j+1)-T(j) \\
                            & \leq i \cdot \delay[\abbox]           
    \end{align*}
    Hence $\amdelay[\abbox] = \max_{i} {T(i) \over i}  \leq \delay[\abbox]$.
  \end{proof}

  These notions of delay are sometimes defined with an offset of $T(1)$, that is, the time before the output of the first solution may be greater than the rest, allowing preprocessing. We decided to ignore this asymmetry in this paper to focus on the core techniques. All our algorithms can be adapted to this setting, see \Cref{sec:conclusion} for more details. Similarly, \emph{postprocessing} -- the time spent between the output of the last solution and the end of the computation -- is sometimes dealt with differently. To simplify the presentation, we assume that there is no postprocessing, that is, a RAM solving an enumeration problem stops right after having output the last solution. This assumption does not affect the complexity classes studied in this paper, as the output of the last solution can be delayed to the end of the algorithm or we could decide to output an additional solution right before stopping.

We naturally generalize this notion to RAM machine. We denote by $\delay[M][x]$, $\amdelay[M][x]$ and $\avdelay[M][x]$ the worst case (respectively amortized, average) delay of the enumeration process induced by $M$ on input $x$. The delay of machine $M$, denoted by $\delay$, is defined as the function mapping any input size $n$ to the value  $\max_{x, |x|=n} \delay [M][x]$. We similarly define the amortized delay $\amdelay$, average delay $\avdelay$ and preprocessing of $M$. 

\subparagraph{Complexity classes.} These notions of delay induce natural complexity classes. Some of them have been originally introduced by Johnson, Yanakakis and Papadimitriou in~\cite{johnson1988generating}. We are mostly interested in two classes.

\begin{definition}[Polynomial Amortized Delay]
  The class $\AmDelayP$ is the class of enumeration problem $\enum{A} \in \EnumP$ such that there exists a polynomial $d$ and a machine $M$ which solves $\enum{A}$ such that for every $x$, $\amdelay[M][x] \leq d(|x|)$.
\end{definition}

\begin{definition}[Polynomial Delay]
  The class $\DelayP$ is the class of enumeration problem $\enum{A} \in \EnumP$ such that there exists a polynomial $d$ and a machine $M$ which solves $\enum{A}$ such that for every $x$, $\delay[M][x] \leq d(|x|)$. 
\end{definition}

Observe that by \Cref{thm:cmpdelay}, we directly have $\DelayP \subseteq \AmDelayP$.  Polynomial delay is the most common notion of tractability in enumeration, because it guarantees both regularity and linear total time and also because it is relatively easy to prove that an algorithm has a polynomial delay. Indeed, most methods used to design enumeration algorithms such as backtrack search with a polynomial time extension problem~\cite{MaryS19}, or efficient traversal of a supergraph of solutions~\cite{lawler1980generating,avis1996reverse,conte2019listing}, yield polynomial delay algorithms on many enumeration problems.

To better capture the notion of tractability in enumeration, it is important to use polynomial space algorithms. We let $\DelayP^{\poly}$ be the class of problems solvable by a machine in polynomial space and polynomial delay. 
We define $\AmDelayP^{\poly}$, as the class of problems which can be solved by a machine in polynomial space and polynomial amortized delay\footnote{This class was denoted $\IncP^{\poly}$ in the conference version~\cite{CapelliS23} of this paper.}.

\subparagraph{Blackbox oracle access.}  It is easy to see that an algorithm with amortized delay $p(n)$ can have arbitrarily large worst-case delay. This paper is focused in understanding when and how one can transform an algorithm with amortized delay $p(n)$ into an algorithm with guaranteed worst-case delay $\poly(p(n))$, a task that we will refer to as \emph{regularization}. To describe such transformations, it will be handy to have a formalism where we can describe how to interact with an existing enumeration process $\abbox$ given as a blackbox. A \emph{RAM with blackbox access to $\abbox$} is a standard RAM with the following blackbox access to the enumeration process $\abbox$: 
 \begin{enumerate}
 \item  $\bbam[\abbox]$, $\bbsol[\abbox]$ and $\bbsize[\abbox]$ respectively return an upper bound on the amortized delay, the number of solutions and the maximal size of solutions  of $\abbox$.
\item $M \gets \bbload[\abbox]$ creates a new simulation $M$ of the enumeration process $\abbox$,
\item $\bbsteps[M]$ returns the number of steps simulated by $M$ so far, with 
$\bbsteps[M] = 0$ when $M$ is created,
\item $\bbstep[M]$ increments $\bbsteps[M]$ by $1$
, and returns False if $\abbox(\bbsteps[M]) = \square$,
\item $\bbisoutput[M]$ returns true if and only if $\abbox(\bbsteps[M]) \in S$, where $S$ is the set of solutions output by $\abbox$,
\item $\bboutput[M]$  which outputs $\abbox(\bbsteps[M])$ if $\bbisoutput[M]$ is true and fails otherwise.
\end{enumerate}

In this paper, we consider that each blackbox operation is performed with time and space complexity $O(1)$.
For $\bbam, \bbsol$ and $\bbsize$, we assume these values are provided along the oracle and hence, can be accessed by reading a register. It is a reasonable assumption in many cases, since such upper bounds are often known for a given enumeration algorithm. 
Every instruction above but $\bbsteps$ could be implemented right away in constant time. To implement $\bbsteps$, it is more delicate in our complexity model: if implemented as a counter, it may become too big for us to assume that incrementing the counter can be done in constant time. This can be circumvented by implementing the counter using Gray code, where incrementing can now be done in constant time. In some algorithm, we need more complex arithmetic operations on such counters (division or comparison). While we cannot implement these operations in constant time, Gray code still allows to get the most significant bit of the counter in constant time, which is enough for our purposes. Hence, to simplify analysis of our algorithms, we assume in this paper that maintaining a counter for $\bbsteps$ and using it in our algorithms can be performed with complexity $O(1)$. Details on how to implement Gray code counters and how to use them exactly so that our algorithms still work with the announced complexity are given in \Cref{sec:oracles}. The space necessary to implement all instructions is not $O(1)$ as we must at least store the configuration of the simulated RAM. We explain how to implement all instructions with a constant overhead in space in \Cref{sec:oracles}.

Given a subset of blackbox instructions $B$, we say that a machine is independent of $B$ if it does not use any instruction in $B$. For example, a machine is independant from $\bbam$ if it runs without calling instruction $\bbam$. 


\subparagraph{Regularization scheme.} In this paper, we study the following question: given an enumeration process $\abbox$, can we transform it into an enumeration process $\abbox'$ which outputs the same set and such that $\delay[\abbox'] = \poly(\amdelay[\abbox])$? We study several variations of this question and show limits on what we can achieve with it. To this end, it is handy to have a name for such transformation. A \emph{regularization scheme} $\calM$ is a RAM with blackbox access such that for every enumeration process $\abbox$ given to it as a blackbox, $\calM$ outputs the same set of solutions as $\abbox$. Moreover, $\calM$ is said to be \emph{order preserving} if for every enumeration process $\abbox$ given to $\calM$ as a blackbox, $\calM$ outputs the same set as $\abbox$, in the same order.



\section{Buffer Based Regularization}
\label{sec:bufferamortization}
  
In this section, we present a naive, folklore regularization scheme whose delay is $O(\bbam)$ for any enumeration process $\abbox$. To do so, we store each solution in a queue and output solutions pulled from the queue with a guaranteed worst case delay. In other words, we show that we can transform the amortized delay of an enumeration algorithm into a (worst case) delay. This scheme is enough to establish that $\AmDelayP=\DelayP$ but it has two shortcomings. The first one is that the approach only works when the given upper bound $\bbam$ of $\amdelay[\abbox]$ is known. We hence give a regularization scheme that do not use $\bbam$ and still achieves a worst case delay $O(d_\abbox (\log d_\abbox)^2)$ where $d_\abbox$ is the real amortized delay of $\abbox$. That is, this scheme almost optimally turns an amortized delay into a worst case delay without any knowledge on the amortized delay of $\abbox$.

The second shortcoming is that the space used by both schemes is large, up to $O(\#\abbox)$, which in many cases, may be exponential in the space originally used by $\abbox$. In \Cref{sec:amort-self-reduc}, we revisit an unpublished technical report by Uno~\cite{uno03} and show that the buffer based technique may be implemented with polynomial space in a restricted case. We will improve on this aspect in \Cref{sec:mainalgo} by introducing a new regularization technique, that is not based on using a buffer. 

\subsection{Classical Method}
\label{sec:classical}

\SetKw{Break}{break}
\SetKwInOut{KwInput}{Input}
\SetKwInOut{KwOutput}{Output}

\begin{algorithm}
  \KwInput{Enumeration process $\abbox$.}
  \KwOutput{Enumerate $\abbox$ with delay bounded by $O(\bbam)$}
 \Begin{
   $Q \gets \queue(\emptyset)$, $B \gets 0$, $M \gets \bbload$\;
   \While{$\bbstep$} {
     $B \gets B+1$\;
     \lIf{$\bbisoutput$}{
        Insert $\bboutput$ in $Q$ \label{line:insert}
        }
     \If{$B = \bbam$}{
       $\out(\pull(Q))$\; \label{line:pull}
       $B \gets 0$\;
       }
     }
     \lWhile{$Q \neq \emptyset$}{$\out(\pull(Q))$}
  }
  \caption{Folklore regularization scheme with worst case delay $O(\bbam)$}
  \label{alg:enumqueue}
\end{algorithm}

The folklore regularization scheme (e.g.~\cite{johnson1988generating,phdstrozecki,carmeli2019enumeration}) is presented in Algorithm~\ref{alg:enumqueue}. The correctness of \Cref{alg:enumqueue} is easy to see: every solution will be eventually stored in the queue at some point and pulled from it, maybe in the last step. However, one has to prove that each time $\pull(Q)$ is called, the queue is not empty. This can be seen as follows: when $\pull(Q)$ is called for the $k^\th$ time, at least $k \cdot \bbam \geq k \cdot \amdelay[\abbox]$ steps of $\abbox$ have been simulated. Hence $\abbox$ have output at least $k$ solutions, that have all been pushed in $Q$ at some point. 

We now provide a precise and, to the best of our knowledge, new analysis of the complexity of the scheme when taking the size of the solutions output by $\abbox$ into account. Indeed, the delay of Algorithm~\ref{alg:enumqueue} is $O(\bbam \cdot b)$ where $b$ is a bound on the sizes of output solutions since each solution output by $\abbox$ have to be stored in the queue and we cannot assume this operation to be independent on $b$. In most applications, $b$ is less than the minimum time between two outputs. In this case, the cost of insertion in the queue can be amortized in the main loop of Algorithm~\ref{alg:enumqueue}, for example, by interleaving queuing process with loop iteration: for each iteration of the loop, we execute a constant amount of operations needed to push a solution in the queue. Since no new solution can be output by $\abbox$ in less than $b$ steps, there will never be more than one queuing process interleft with the loop. Hence, in this case, \Cref{alg:enumqueue} has $O(\bbam)$ delay. We have proven:

\begin{lemma}[Folklore]\label{lemma:naivescheme-bad}
Let $\abbox$ be an enumeration process. Then \Cref{alg:enumqueue} is a regularization scheme with worst case delay $O(\bsize \cdot \bbam)$. Moreover, if for every $i$, the time between the output of solutions $s_i$ and $s_{i+1}$ is at least $|s_i|$, then the worst case delay of \Cref{alg:enumqueue} is $O(\bbam)$.
\end{lemma}

\Cref{lemma:naivescheme-bad} is sufficient to prove that the class of problems from $\EnumP$ which can be solved with polynomial delay is the same as the class of problems from $\EnumP$ which can be solved with amortized polynomial delay.

\begin{theorem}
  \label{thm:delayp_is_amdelayp} $\DelayP = \AmDelayP$.
\end{theorem}
\begin{proof}
  The left to right inclusion is trivial by \Cref{thm:cmpdelay}: every algorithm with polynomial delay also have polynomial amortized delay. For the other inclusion, let $A \in \AmDelayP$ and let $I$ be an algorithm which on input $x$, outputs $A(x)$ with amortized delay $p(x)$. Since $A \in \EnumP$, there is a polynomial $q$ such that for every $x$ and $y \in A(x)$, $|y| \leq q(|x|)$. We built an algorithm enumerating $A(x)$ on input $x$ by using \Cref{alg:enumqueue} with enumeration process $\abbox = I(x)$ where we set $\bbam = p(|x|)$ and $\bbsize = q(|x|)$. By \Cref{lemma:naivescheme-bad}, this algorithm has delay $O((q\times p)(|x|))$ for every $x$, and hence, $A \in \DelayP$.
\end{proof}

We conclude this section by providing a better analysis of the complexity of this buffer based regularization scheme. In cases where the time spent between two outputs is small, for example, when a new solution is changed by modifying only small parts of a previous one and, the best bound we get for \Cref{alg:enumqueue} is $O(\bbam \cdot \bsize)$, because each push into the queue may take up to $\bsize$. We can however improve this bound by slightly modifying the enumeration process $\abbox$ so that the time between output $s_i$ and $s_{i+1}$ is at least $|s_i|$: we simply copy solution $s_i$ somewhere in the memory before proceeding with the execution of $\abbox$, as illustrated in \Cref{alg:pacing}.

\begin{algorithm}
  \KwInput{Enumeration process $\abbox$.}
  \KwOutput{Enumerate $\abbox$ such that delay between solutions $s_i$ and $s_{i+1}$ is at least $|s_i|$.}
  \Begin{
    $M \gets \bbload$\;
   \While{$\bbstep[M]$} {
     \If{$\bbisoutput[M]$}{
       $l \gets |\bboutput[M]|$\;
       Copy $\bboutput[M]$ into $R_i \dots R_{i+l}$\;
       $\Output(i, i+l)$\;
     }
      }
  }
  \caption{Spacing the solutions of $\abbox$.}
  \label{alg:pacing}
\end{algorithm}

\begin{lemma}\label{lemma:pacing}
  Let $\abbox$ be an enumeration process. Then \Cref{alg:enumqueue} enumerates the same set as $\abbox$. Moreover, the amortized delay of \Cref{alg:enumqueue} is $O(\amdelay[\abbox]+\bsize[\abbox])$ and for every $i$, the time between the output of $s_i$ and $s_{i+1}$ is at least $|s_i|$.
\end{lemma}
\begin{proof}
  Let $\abbox'$ be the new enumeration process given by \Cref{alg:pacing}. It is clear that $\abbox'$ enumerates the same set as $\abbox$ and that the time spent between output $s_i$ and $s_{i+1}$ is at least $|s_i|$ since it has to copy $s_i$ into $R_i,\dots,R_{i+l}$. Now, let $t_i$ be the time when $s_i$ is output by $\abbox$ and $t_i'$ the time when it is output by $\abbox'$. A straightforward induction on $i$ is enough to see that $t_i' \leq c \times (t_i + i\times \bsize)$ where $c$ is some constant overhead induced by the simulation. Hence, $\amdelay[\abbox'] \leq c \times (\max_i {t_i \over i} + \bsize) \leq O(\amdelay[\abbox]+\bsize)$.
\end{proof}

\begin{corollary}\label{prop:queue}
There exists a regularization scheme such that for every enumeration process $\abbox$, the worst case delay of the scheme is $O(\bbam+\bbsize)$.
\end{corollary}
\begin{proof}
  We simply run \Cref{alg:enumqueue} with enumeration process $\abbox'$, where $\abbox'$ is given by \Cref{alg:pacing} and where we set $\bbam[\abbox'] = \bbam+\bbsize$. By \Cref{lemma:pacing},  $\bbam[\abbox']$ is an upper bound on $\amdelay[\abbox']$. Moreover, the delay between output $s_i$ and $s_{i+1}$ by $\abbox'$ is at least $|s_i|$. Hence by \Cref{lemma:naivescheme-bad}, \Cref{alg:enumqueue} hence enumerates the solutions of $\abbox'$, that is, the solutions of $\abbox$,  with worst case delay $O(\bbam[\abbox']) = O(\bbam+\bbsize)$.
\end{proof}


\subsection{Adaptative Buffering}

We now improve the previous method by designing a regularization scheme that is independent from the upper bound $\bbam$ on the amortized delay of $\abbox$. \Cref{alg:enumqueue_improved} is a modified version of \Cref{alg:enumqueue} that automatically adapts the amortization to unknown amortized delay. The idea is to replace the exact amortized delay by a local approximation of it.

\begin{algorithm}[H]
    \KwInput{Enumeration process $\abbox$.}
  \KwOutput{Enumerate $\abbox$ with delay bounded by $O(p (\log p)^2)$ for $p = \amdelay[\abbox]$.}

  \Begin{
    $Q \gets \queue(\emptyset)$,
    $M \gets \bbload$,
    $S \gets 0$,
    $B \gets 0$,
    $d \gets 8$
    \;
    \While{$\bbstep$} {
      $B \gets B+1$\;
      \If{$\bbisoutput$}{
        $\queueinsert(\bboutput,Q)$\;
        $S \gets S+1$\;
      }
      $d = \max (d,\frac{\bbsteps}{S+1})$\;
      \If{$B \geq \lceil d\log(d)^2\rceil$}{\label{line:ceps}
        $\out(\pull(Q))$\; \label{line:pull2}
        $B \gets 0$\;
      }
    }
    \lWhile{$Q \neq \emptyset$}{
      $\out(\pull(Q))$
    }
   
  }
  \caption{Regularizing of an enumeration process $\abbox$ without knowledge on its amortized delay.}
  \label{alg:enumqueue_improved}
\end{algorithm}

\begin{theorem}\label{th:adaptative}
  Let $\abbox$ be an enumeration process with amortized delay $p = \amdelay[\abbox]$ and solution size $b = \bsize[\abbox]$. Then \Cref{alg:enumqueue_improved} is a regularization scheme with worst case delay $O(bp\log(p)^2)$. Moreover, if for every $i$, the time between the output of solutions $s_i$ and $s_{i+1}$ is at least $|s_i|$, then the worst case delay of \Cref{alg:enumqueue_improved} is $O(p\log(p)^2)$.
\end{theorem}

\begin{proof}
    The structure of Algorithm~\ref{alg:enumqueue_improved} is similar to the one of Algorithm~\ref{alg:enumqueue} but instead of removing a solution from the queue every $\bbam$ steps, we use a well chosen value, derived from the amortized delay up to the current time (see Line~\ref{line:ceps}).

  First observe that during the execution of the algorithm, we always have $d \leq p$. Indeed, $d$ is by definition always smaller than $\max_t {t \over N_t+1}$ where $N_t$ is the number of solutions output at time $t$ and hence cannot exceed the amortized delay $p$ of $\abbox$. Moreover, between the output of two solutions, the while loop of \Cref{alg:enumqueue_improved} is executed at most $\lceil d \log(d)^2 \rceil \leq \lceil p \log(p)^2 \rceil$ times. Hence, if proven correct, the delay of \Cref{alg:enumqueue_improved} is at most $O(wp\log(p)^2)$ where $w$ is the complexity of executing the while loop.

  There are two possible non constant operations in the while loop. The first is the computation of the ratio at Line~\ref{line:ceps}. We cannot compute it in constant time since both $S$ and $\steps(M)$ can become too large. Both $S$ and $\steps(M)$ are actually counters that we only need to increment by $1$. We can use a data structure based on Gray codes to encode their values which allows us to efficiently increment them. Moreover, by keeping track of the most significant bit of each counter, we can compute a sufficiently good upperbound of the ratio in constant time. The details are given in \Cref{sec:countersknown}. In this proof, we hence assume that  Line~\ref{line:ceps} is computed in constant time. 

  The second non constant operation is the insertion in the queue which may cost $O(b)$. Hence, in the worst case, the delay of \Cref{alg:enumqueue_improved} is at most $O(bp\log(p)^2)$. However, if for every $i$, the time between the output of solutions $s_i$ and $s_{i+1}$ is at least $|s_i|$, then we can amortize the insertion of $s_i$ in the queue over the $|s_i|$ next iteration of the loop. Hence, in this case, the worst case delay of \Cref{alg:enumqueue_improved} is $O(p\log(p)^2)$. 

It remains to show that each time a solution is pulled from the queue on Line~\ref{line:pull2}, the queue cannot be empty.  The integer $S$ is by construction the number of solutions found so far in the simulation of $\abbox$ and $j$ is the number of steps since the last output, when a solution is removed.  Let $Z_0 = [0,1]$ and $Z_l = [2^l,2^{l+1}[$. We let $d_l$ be the value of $d$ in Algorithm~\ref{alg:enumqueue_improved}, when the simulation $M$ of $\abbox$ is at the beginning of $Z_l$, that is, when $\bbsteps = 2^l$.
%
We let $S_{l}$ be the number of solutions output by $\abbox$ before time $2^l$ and let $P_l$ be the number of solutions output by \Cref{alg:enumqueue_improved} when  $ \bbsteps = 2^{l+1}$. We prove that for every $l$, $P_l \leq S_{l}$. In other words, when $\bbsteps \in Z_l$, we have found at least $S_l$ solution and output at most $P_l$, hence, we never pull from an empty queue. 

  The value $P_l$ is the sum of solutions output over each slice $Z_i$ up to $Z_l$. Since $d$ is increasing in \Cref{alg:enumqueue_improved}, the number of solutions output while $\bbsteps \in Z_i$ is at most ${|Z_i| \over d_i\log(d_i)^2}$ since $d_i$ is the value of $d$ when $\bbsteps = 2^i$, that is, when $M$ is at the beginning of $Z_i$. Hence, $$P_l \leq \sum_{i=0}^{l} \frac{|Z_i|} {d_i\log(d_i)^2}.$$ 

   Since $d$ is initialized to $8$ and is increasing, for all $i$ we have $d_i\log(d_i)^2 \geq 64$
   and we use this fact to bound the first terms of the previous sum.

   \begin{align*}
     \sum_{i=0}^{l-\lfloor \log(d_l) \rfloor + 1} \frac{|Z_i|} {d_i\log(d_i)^2}
     & \leq \sum_{i=0}^{l-\lfloor \log(d_l) \rfloor + 1} \frac{|Z_i|}{64} \\ 
     & \leq \frac{1}{64} \sum_{i=0}^{l- \lfloor \log(d_l) \rfloor + 1} 2^{i} , \text{ since } |Z_i|=2^i \\
     & \leq \frac{1}{64} 2^{l-\lfloor \log(d_l) \rfloor +2} \leq \frac{2^{l}}{8d_l} \\
     & \leq \frac{S_l+1}{8}.
   \end{align*}
   The last inequality follows from the fact that $d_l \geq {2^l \over S_l +1 }$. Indeed, $d_l$ and $S_l$ are respectively the values of $d$ and $S$ in \Cref{alg:enumqueue_improved} when $\bbsteps = 2^l$. 

   We now proceed to bound the last terms of the sum. Observe that, $d_{i+1} \leq  2d_{i}$. Indeed, by definition, $d_i \geq {2^i \over S_i+1}$ and $d_{i+1} = \max_{2^i < j \leq 2^{i+1}} {j \over S_i+k_j+1}$ where $k_j$ is the number of solutions found by $\abbox$ between time $2^i$ and $2^{i}+j$. Hence $d_{i+1} \leq {2^{i+1} \over S_i+1}$ by upper bounding $j$ with $2^{i+1}$ and lower bounding $k_j$ by $0$. That is $d_{i+1} \leq 2d_i$ and by a straightforward induction, $d_{l-i} \leq 2^{-i}d_{l}$ for every $i$.


   \begin{align*}
     \label{eq:adaptative-large}
    \sum_{i=l-\lfloor \log(d_l) \rfloor +2}^{l} {|Z_{i}| \over d_{i}\log(d_{i})^2} & =  \sum_{j=0}^{\lfloor \log(d_l) \rfloor -2} {|Z_{l-j}| \over d_{l-j}\log(d_{l-j})^2} \\
     & \leq \sum_{j=0}^{\lfloor \log(d_l) \rfloor-2} {2^{l-j} \over 2^{-j}d_l\log(2^{-j}d_l)^2} \text{ from what preceeds} \\
     & \leq {2^{l} \over d_l} \sum_{j=0}^{\lfloor \log(d_l) \rfloor-2} {1 \over (\log(d_l) - j)^2} \\
                                                                                   & \leq {2^{l} \over d_l} \sum_{j=0}^{\lfloor \log(d_l) \rfloor-2} {1 \over (\lfloor \log(d_l) \rfloor - j)^2} 
                                                                                    = {2^{l} \over d_l} \sum_{i=2}^{\lfloor \log(d_l) \rfloor} {1 \over i^2} \\
                                                                                   & \leq {2^{l} \over d_l} ({\pi^2 \over 6}-1) \\
    & \leq {3 \over 4}{2^{l} \over d_l} \leq {3\over 4} (S_l+1) 
  \end{align*}

By summing both inequalities, we obtain $P_l \leq {7 \over 8}(S_{l}+1)$. Since both $P_l$ and $S_l$ are integers, we conclude that $P_l \leq S_l$ which proves the proposition.
\end{proof} 

As before, by modifying the blackbox oracle using \Cref{alg:pacing}, we immediately get:
\begin{corollary}\label{cor:adaptative_best}
  There exists a regularization scheme, independant from $\bbam$, such that for every enumeration process $\abbox$, the worst case delay of the scheme is $O(p \log(p)^2)$ where $p = \bbsize + \amdelay[\abbox]$.
\end{corollary}

\Cref{cor:adaptative_best} adapts to the unknown amortized delay of the algorithm it regularizes. One can however observe that the worst case delay in this case is not as good as the one of \Cref{prop:queue}, which is linear. We formally show in \Cref{sec:lower_bound_unknown_delay} (\Cref{thm:lower_bound_udelay}) using an adversarial blackbox $\abbox$ that there is no regularization scheme independant from $\bbam$ with $O(\amdelay[\abbox])$ worst case delay.

We conclude this section by observing that in the proof of \Cref{cor:adaptative_best}, the second inequality works because $\sum_{i=1}^\infty {1\over i^2}$ is bounded. If we replace the $d\log(d)^2$ bound of Line~\ref{line:ceps}  in \Cref{alg:enumqueue_improved} by $d \log(d)^{1+\varepsilon}$ for $\epsilon > 0$, the same proof as \Cref{cor:adaptative_best} works because $\sum_{i=2}^\infty {1\over i^{1+\varepsilon}} < +\infty$. Hence, if the initial value of $d$ is large enough, we would also prove the inequality $P_l \leq S_l$ in the same way and get:
\begin{theorem}\label{thm:adaptative-eps}
  For every $\varepsilon>0$, there exists a regularization scheme, independant from $\bbam$, such that for every enumeration process $\abbox$, the worst case delay of the scheme is $O(p \log(p)^{1+\varepsilon})$ where $p = \bbsize + \amdelay[\abbox]$.
\end{theorem}

\subsection{Amortizing Self-Reducible Problems}
\label{sec:amort-self-reduc}

In this section, we show how the classical buffer amortization method (Algorithm~\ref{alg:enumqueue}) can be implemented with only a polynomial space overhead \emph{on a certain class of enumeration algorithms}. Therefore, the results of this section are not expressed in the framework of enumeration process, contrary to the rest of the article. We use this method to prove that the worst case delay of these algorithms can be made almost as good as their \emph{average delay} and not only their \emph{amortized delay}. To obtain this result in the conference version of this article~\cite{CapelliS23}, we did use the more elaborate geometric regularization presented in the next section. In a private communication, Takeaki Uno pointed out that a method similar to ours using a bounded buffer achieves the same result. This method was presented by Uno under the name \textbf{Output Queue Method} in an unpublished technical report~\cite{uno03}. 

Given an enumeration problem $\enum{A}$, we assume in this section that the solutions in $A(x)$ are sets over some universe $U(x)$. From $A$, we define the predicate $\tilde{A}$ which contains the pairs $((x,a,b),y)$ such that $y \in A(x)$ and $a \subseteq y \subseteq b$.
 Moreover, we define a self-reducible\footnote{For a classical definition of self-reducible problems, see~\cite{khuller1991planar,balcazar1990self}.} variant of $\enum{A}$ and the extension problem $\ext{A}$ defined as the set of triples $(x,a,b)$ such that there is a $y$ in $\tilde{A}(x,a,b)$.

Solving $\enum{A}$ on input $x$ is equivalent to solving $\enum{\tilde{A}}$ on $(x,\emptyset,U(x))$. Let us now formalize a recursive method to solve $\enum{\tilde{A}}$, sometimes called \emph{binary partition}~\cite{uno1998new}, because it partitions the solutions to enumerate in two disjoint sets. Alternatively, it is called backtrack~\cite{read1975bounds} or \emph{flashlight search}~\cite{boros_generating_2009}, because we peek at subproblems to solve them only if they yield solutions.  To our knowledge, all uses of flashlight search in the literature can be captured by this formalization, except for the partition of the set of solutions which can be in more than two subsets. We only present the binary partition for the sake of clarity, but our analysis can be adapted to finer partitions.

Given an instance $(x,a,b)$ of $\enum{\tilde{A}}$ and some global auxiliary data $D$, a flashlight search consists in the following (subroutines are not specified, and yield different flashlight searches):
\begin{itemize}
    \item if $a=b$, $a$ is ouput 
    \item otherwise choose $u \in b \setminus a$; 
    \begin{itemize}
        \item if $(x,a\cup\{u\},b) \in \ext{A}$, compute some auxiliary data $D_1$ from $D$ and
        make a recursive call on $(x,a\cup\{u\},b)$;
        \item if $(x,a,b \setminus \{u\}) \in \ext{A}$, compute some auxiliary data $D_2$ from $D_1$ and make a recursive call on $(x,a,b \setminus \{u\})$, then compute $D$ from $D_2$.
    \end{itemize}
\end{itemize}

Flashlight search can be seen as a depth-first traversal of a \emph{partial solutions tree}. A node of this tree is a pair $(a,b)$ such that $(x,a,b) \in \ext{A}$. Node $(a,b)$ has children $(a\cup\{u\},b)$ and $(a,b \setminus \{u\})$ if they are nodes. A leaf is a pair $(a,a)$ and the root is $(\emptyset,U(x))$. The \emph{cost} of a node $(a,b)$ is the time to execute the flashlight search on $(x,a,b)$ \emph{except the time spent in recursive calls}. Usually, the cost of a node comes from deciding $\ext{A}$ and modifying the global data structure $D$ used to solve $\ext{A}$ faster. 

The cost of a path in a partial solution tree is the sum of the costs of the nodes in the path. The \emph{path time} of a flashlight search algorithm is defined as the maximum over the cost of all paths from the root. The delay is bounded by twice the path time since, between two output solutions, a flashlight search traverses at most two paths in the tree of partial solutions. To our knowledge, all bounds on the worst case delay of flashlight search are proved by bounding the path time. The path time is bounded by $\sharp U(x)$ times the complexity of solving $\ext{A}$ plus the complexity of updating the auxiliary data. Auxiliary data can be used to amortize the cost of evaluating $\ext{A}$ repeatedly, generally to prove that the path time is equal to the complexity of solving $\ext{A}$ once, e.g., when generating minimal models of monotone CNF~\cite{murakami2014efficient}.

Using flashlight search, we obtain that if $\ext{A} \in \P$ then $\enum{A} \in \DelayP$. Many enumeration problems have actually been shown to be in $\DelayP$ by showing their extension problem is in $\P$~\cite{uno2015constant,MaryS19}. However, some $\DelayP$ problems provably escape this framework, e.g., the extension of a maximal clique which is in $\DelayP$ but for which hardness of the extension problem can be derived from the fact that finding the largest maximal clique in lexicographic order is \NP-hard~\cite{johnson1988generating}.

The average delay or the worst case delay of a flashlight search is given for any input $(x,a,b)$ of $\enum{\tilde{A}}$ and not only for the input $(x,\emptyset,U(x))$ corresponding to an input of $\enum{A}$. We make the assumption that $|U(x)|$ is less than $n = |x|$, hence an instance $(x,a,b)$ is of size less than $3n$ and we give all delays as functions of $n$. The average delay of flashlight search is sometimes much smaller than its worst case delay, especially when the internal nodes of the tree of partial solutions are guaranteed to have many leaves. Uno describes the pushout method~\cite{uno2015constant} harnessing this property to obtain constant average delay algorithms for many problems such as generating spanning trees. 

We now prove that the average delay of a flashlight search can be turned into a worst case delay, by using a buffer of bounded size. The only space overhead is the size of this buffer and there is no time overhead. However, contrarily to all regularization schemes presented in this article, there is an increase in the preprocessing which becomes larger than the obtained worst case delay.

 We present a proof similar to the one in~\cite{uno03}, which is based on the fact that we can bound the average delay of a flashlight search between any two points of time, a property even stronger than having a polynomial amortized delay.

\begin{figure}
  \begin{center}
    \def\svgwidth{6cm} 
\begingroup%
  \makeatletter%
  \providecommand\color[2][]{%
    \errmessage{(Inkscape) Color is used for the text in Inkscape, but the package 'color.sty' is not loaded}%
    \renewcommand\color[2][]{}%
  }%
  \providecommand\transparent[1]{%
    \errmessage{(Inkscape) Transparency is used (non-zero) for the text in Inkscape, but the package 'transparent.sty' is not loaded}%
    \renewcommand\transparent[1]{}%
  }%
  \providecommand\rotatebox[2]{#2}%
  \newcommand*\fsize{\dimexpr\f@size pt\relax}%
  \newcommand*\lineheight[1]{\fontsize{\fsize}{#1\fsize}\selectfont}%
  \ifx\svgwidth\undefined%
    \setlength{\unitlength}{340.19130436bp}%
    \ifx\svgscale\undefined%
      \relax%
    \else%
      \setlength{\unitlength}{\unitlength * \real{\svgscale}}%
    \fi%
  \else%
    \setlength{\unitlength}{\svgwidth}%
  \fi%
  \global\let\svgwidth\undefined%
  \global\let\svgscale\undefined%
  \makeatother%
  \begin{picture}(1,1.14750807)%
    \lineheight{1}%
    \setlength\tabcolsep{0pt}%
    \put(0,0){\includegraphics[width=\unitlength,page=1]{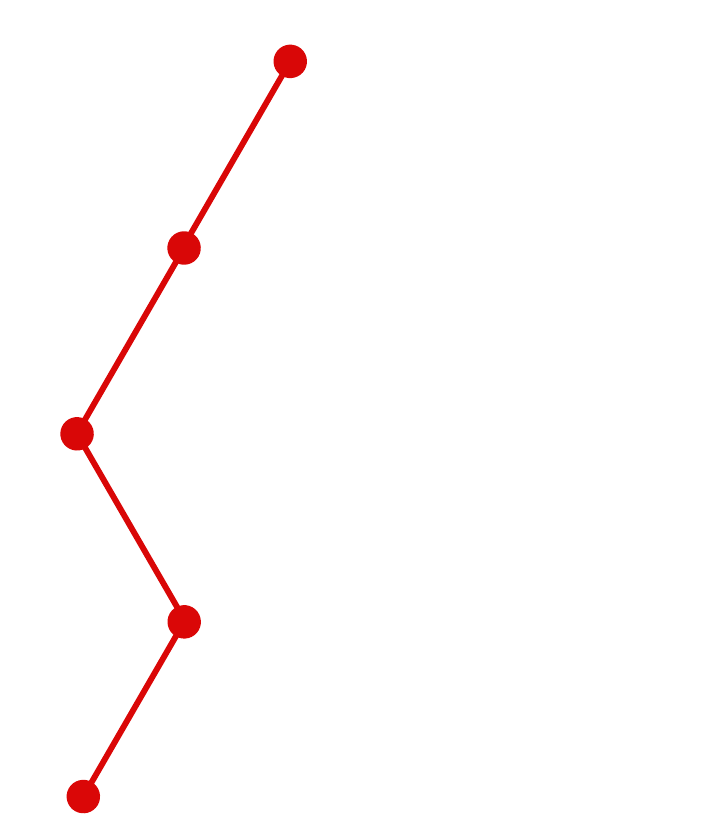}}%
    \put(0.02384496,0.01150413){\color[rgb]{0,0,0}\makebox(0,0)[lt]{\lineheight{1.25}\smash{\begin{tabular}[t]{l}$u$\end{tabular}}}}%
    \put(0,0){\includegraphics[width=\unitlength,page=2]{outputqueue.pdf}}%
    \put(0.3937266,1.09954965){\color[rgb]{0,0,0}\makebox(0,0)[lt]{\lineheight{1.25}\smash{\begin{tabular}[t]{l}$w$\end{tabular}}}}%
    \put(0.91138715,0.25015151){\color[rgb]{0,0,0}\makebox(0,0)[lt]{\lineheight{1.25}\smash{\begin{tabular}[t]{l}$v$\end{tabular}}}}%
  \end{picture}%
\endgroup%

\caption{A traversal of the tree of partial solutions by the flashlight search. We represent the subtree visited between time $t$ when the algorithm visits
$u$ and time $t'$ when the algorithm visits $v$. The subproblems completely solved recursively are in blue, the path between $u$ and $v$ in red.\label{fig:outputqueue}}
\end{center}
\end{figure}

\begin{lemma}\label{lemma:flashlightdelay}
Let $\enum{A}$ be an enumeration problem solved by a flashlight search algorithm $I$
with path time $p(n)$ and average delay $d(n)$. Then, between times $t$ and $t'$, with $t<t'$, algorithm $I$ outputs at least $\lceil{t'- t - 2p(n) \over d(n)}\rceil$ solutions.
\end{lemma}
\begin{proof}
Let us denote by $u$ the node of the partial solution tree visited by $I$ at time $t$ and by $v$ the node visited at time $t'$. Let us consider the path between $u$ and $v$ and let $w$ be their least common ancestor as represented in Figure~\ref{fig:outputqueue}. 
Algorithm $I$ is a flashlight, thus it does a depth first traversal of the partial solution tree. Hence, the time spent by $I$ from $t$ to $t'$ is bounded by:

\begin{itemize}
    \item the time spent in the path from $u$ to $w$ bounded by $p(n)$ the path time
    and the time in the path from $w$ to $v$ also bounded by $p(n)$ (red paths in Figure~\ref{fig:outputqueue})
    \item the time spent in suproblems which are right children of elements in the path from $u$ to $w$ or left children of elements in the path from $w$ tu $v$
    (blue triangles in Figure~\ref{fig:outputqueue})
\end{itemize}

Therefore, the time spent by $I$ in subproblems along the path between $u$ and $v$ is at least $t'- t - 2p(n)$. Each subproblem is an instance of $\enum{\tilde{A}}$ of the same size $n$, hence $I$ recursively solves it with average delay bounded by $d(n)$. Therefore, these subproblems contributes at least $\lceil(t'- t - 2p(n))/d(n)\rceil$ solutions, which are produced between time $t$ and $t'$.
\end{proof}

We use \Cref{lemma:flashlightdelay} to show that a slight modification of \Cref{alg:enumqueue} where the queue size is bounded can be used to improve the worst case delay of a flashlight search algorithm for $\enum{A}$:

\begin{theorem}\label{th:self-reduction}
Let $\enum{A}$ be an enumeration problem solved by a flashlight search algorithm, with space $s(n)$, path time $p(n)$, average delay $d(n)$ and $b(n)$ the size of a single solution. There is an algorithm solving $\enum{A}$ on any input $x$, with preprocessing $O(p(n))$, delay $O(d(n)+ b(n))$ and space $O(s(n) + \frac{p(n)b(n)}{d(n)+b(n)})$.
\end{theorem}
\begin{proof}
Let $I$ be the flashlight search algorithm solving $\enum{A}$. We apply the tranformation of \Cref{alg:pacing} to $I$ to obtain $I'$, a flash light search whith an average delay of $D(n) \in O(d(n) + b(n))$ and the same path time $p(n)$. This transformation ensures that Algorithm~\ref{alg:enumqueue} takes constant time per iteration on average between two outputs.

We consider a slightly modified version of Algorithm~\ref{alg:enumqueue} where the queue $Q$ is of bounded size: the queue cannot contain more than $\lceil 2p(n)/D(n) \rceil + 1$ solutions. When an element is inserted into a full queue, the first element of the queue is output to make room for the new one. Finally, the regular output of solutions from $Q$ is done every $D(n)$ simulation steps and begins after the queue is full for the first time.

Let $I''$ be the algorithm obtained by the regularization of $I'$ with a bounded queue as described in the previous paragraph. The preprocessing step of $I''$ is to simulate $I'$ up to the point $Q$ is full. We apply Lemma~\ref{lemma:flashlightdelay} with $t=0$ and $t'= 4p(n) + 1$, which ensures that the number of solutions inserted into $Q$ is at least $\lceil 2p(n)/D(n) \rceil + 1$. Therefore, $I''$ do at most $4p(n) + 1$ simulations steps of $I'$.

By construction of the algorithm, each $D(n)$ simulation steps or when the queue is full and a solution is inserted, the first solution of the queue is output. Hence, the delay of the algorithm is bounded by $O(D(n))$.

We only need to prove that the queue is never emtpy during the algorithm, when we remove a solution. Let us consider $t_1$ any point in the runing time of $I'$ and let us prove that the queue contains at least a solution when $I''$ simulates the $t_1$th time step of $I'$. Let $t_0$ be the largest time smaller than $t_1$ such that $Q$ is full. This time exists, since the algorithm fills up the queue completely during the preprocessing. By Lemma~\ref{lemma:flashlightdelay}, applied to $t_0$ and $t_1$, at least $\lceil{ t_1- t_0 - 2p(n) \over D(n)} \rceil$ solutions have been inserted between $t$ and $t'$. Moreover, solutions have been output exactly every $D(n)$ steps, since the queue was never full inbetween $t_0$ and $t_1$. Therefore, at most $\lceil (t_1-t_0)/D(n)\rceil$ solutions were removed from the queue. Thus, the number of solutions in $Q$ has decreased by at most $\lceil 2p(n)/D(n) \rceil$ between $t_0$ and $t_1$. Since $Q$ contains $\lceil 2p(n)/D(n) \rceil + 1$ solution at time $t_0$, at time $t_1$ the queue contains at least one solution.
\end{proof}

 Theorem~\ref{th:self-reduction} has a direct application to the problem of generating models of a DNF formula, which has been extensively studied by the authors in~\cite{capelli2021enumerating}. Let us denote by $n$ the number of variables of a DNF formula, by $m$ its number of terms and by $\enum{DNF}$ the problem of generating the models of a DNF formula. The size of a DNF formula is at least $m$ and at most $O(mn)$ (depending on the representation and the size of the terms), which can be exponential in $n$. Hence, we want to understand whether $\enum{DNF}$ can be solved with a delay polynomial in $n$ only, that is depending on the size of a model of the DNF formula but not on the size of the formula itself. A problem that admits an algorithm with a delay polynomial in the size of a single solution is said to be \emph{strongly polynomial} and is in the class $\SDelayP$. One typical obstacle to being in $\SDelayP$ is dealing with large non-disjoint unions of solutions. The problem $\enum{DNF}$ is an example of such difficulty: its models are the union of the models of its terms, which are easy to generate with constant delay. 

The paper~\cite{capelli2021enumerating} defines the \emph{strong DNF enumeration conjecture} as follows: there is no algorithm solving $\enum{DNF}$ in delay $o(m)$. It also describes an algorithm solving $\enum{DNF}$ in \emph{average} sublinear delay. It is based on a tight analysis of a flashlight search, with appropriate data structures and choice of variables to branch on (Theorem $10$ in~\cite{capelli2021enumerating}). Thanks to Theorem~\ref{th:self-reduction}, we can trade the average delay for a worst case delay and falsify the strong DNF enumeration conjecture.

\begin{corollary}\label{cor:DNF}
There is an algorithm solving $\enum{DNF}$ with preprocessing $O(mn)$, delay $O(m^{1-\log_3(2)}n)$ and space $O(mn)$.
\end{corollary}
\begin{proof}
 The algorithm of~\cite{capelli2021enumerating} enumerates all models with average delay $O(m^{1-\log_3(2)}n)$. The space used is the representation of the DNF formula by a trie, that is $O(mn)$. The path time is the cost of modifying the trie along a branch of the flashlight, that is $O(mn)$. The size of a solution is $n$. We apply Theorem~\ref{th:self-reduction} to this algorithm to prove the result.
\end{proof}

For monotone DNF formulas, Theorem~$13$ of~\cite{capelli2021enumerating} gives a flashlight search with an average delay $O(n)$. Hence, we obtain an algorithm with delay $O(n)$ listing the models of monotone DNF formulas by Theorem~\ref{th:self-reduction}. It gives an algorithm having an exponentially better delay, preprocessing and space usage than the algorithm given by Theorem~$12$ of~\cite{capelli2021enumerating}. 

\begin{corollary}\label{cor:monotone-DNF}
There is an algorithm solving $\enum{DNF}$ on monotone formulas with preprocessing $O(mn)$, delay $O(n)$ and space $O(mn)$.
\end{corollary}

\section{Geometric Regularization}
\label{sec:mainalgo}

\subsection{Geometric Regularization}
\label{sec:geoamo}

\Cref{alg:enumqueue} transforms the amortized delay of an enumeration process $\abbox$ into guaranteed worst case delay. However, it uses a queue to delay the output of solutions which may become large, possibly linear in the number of solutions of $\abbox$. When applied to a problem in $\AmDelayP$, it means that \Cref{alg:enumqueue} can use a queue of size exponential in the input size. A natural question is hence to understand whether one can design a regularization scheme which has delay polynomial in the amortized delay and which space overhead is better than linear in the number of solutions.

We answer this question positively and introduce a new technique called geometric regularization and whose pseudocode is given in \Cref{alg:enumit}. We summarize the guarantees of this new regularization scheme in the theorem below:

\SetKw{Break}{break}
\SetKwInOut{KwInput}{Input}
\SetKwInOut{KwOutput}{Output}

\begin{algorithm}
  \KwInput{Enumeration process $\abbox$} 
  \KwOutput{Enumerate $\abbox$ with delay bounded by $O(\bbam \cdot \log(\bbsol))$ and space $O(\log(\bbsol) \cdot \bbspace)$}
  \Begin{
    $N \gets \lceil \log(\bbsol) \rceil$, $d \gets \bbam$\;
    \lFor{$i=0$  \KwTo $N$}{
      $M[i] \gets \bbload$
    }
    $j \gets N$\;
    \While{$j \geq 0$}{\label{enumit:line:while}
      \For{$B \gets 2 \cdot d$  \KwTo $0$}{\label{enumit:line:for}
        $\ramstep(M[j])$\;
        \If{$\bbisoutput[M[j]]$ \textbf{\emph{and}} $\bbsteps[M[j]]
          \in Z_j$ \label{line:test}}{ 
          $\out(\bboutput[M[j]])$\;
          $j \gets N$\;
          \Break;
        }
      }
      \lIf{$B = 0$}{$j \gets j-1$}
    }
  }
  \caption{Geometric regularization, guarantees are in \Cref{th:geometric}. In the code, $d$ stands for $\bbam$ and $Z_0=[0;d]$ and $Z_j = [2^{j-1}d+1;2^jd]$ for $j>0$.}
  \label{alg:enumit}
\end{algorithm}


\begin{theorem}
  \label{th:geometric} For every enumeration process $\abbox$, \Cref{alg:enumit} with oracle $\abbox$ is an enumeration scheme with worst case delay $O(\bbam \cdot \log(\bbsol))$ and space $O(\log(\bbsol))$.


\end{theorem}

Before presenting the algorithm in details and proving \Cref{th:geometric}, we explain why it has interesting consequences on the complexity of enumeration problem. While \Cref{thm:delayp_is_amdelayp} establishes that $\AmDelayP = \DelayP$, it is not sufficient to prove that the class of problems with polynomial amortized delay and polynomial space, denoted by $\AmDelayP^\poly$, is the same as the class of problems with polynomial delay and polynomial space, denoted by $\DelayP^\poly$. The regularization scheme from \Cref{alg:enumit} allows us to prove the more general result:

\begin{theorem}
  \label{thm:amdelayp_is_amdelayp_poly}
  $\AmDelayP^{\poly} = \DelayP^\poly$.
\end{theorem}
\begin{proof}
  Let $A$ be a problem in $\AmDelayP^{\poly}$. By definition, there is a polynomial $b$ such that for every $x$, the solutions $y \in A(x)$ have size at most $b(|x|)$. In particular, it means that the size of $A(x)$ is at most $2^{b(|x|)}$. Moreover, there exists an algorithm $I$ for $A$ such that on input $x$, it outputs $A(x)$ with amortized delay $d(|x|)$ and space $s(|x|)$, where both $s$ and $d$ are polynomials.

  We now design a polynomial delay algorithm for $A$ as follows: on input $x$, we run the regularization scheme from \Cref{alg:enumit} with oracle $\abbox := I(x)$, $\bbam := d(|x|)$ and $\bbsol := 2^{b(|x|)}$. By \Cref{th:geometric}, this algorithm outputs $A(x)$ with delay $O(d(|x|)b(|x|))$ and space $O(b(|x|))$ which are both polynomial in $|x|$. Now \Cref{alg:enumit} uses RAM machine with blackbox access. However, one can emulate the blackbox on a true RAM machine where each operation takes constant time and each pointer to a machine uses only $O(s(|x|)+\log(d(|x|))^2)$ memory which is polynomial in $|x|$, see \Cref{sec:oracles} for more details. Hence, $A \in \DelayP^\poly$. 
\end{proof}

The rest of this section is dedicated to the proof of \Cref{th:geometric}. The idea of geometric regularization from \Cref{alg:enumit} is to simulate several copies of $\abbox$ at different speeds. Each copy is responsible for enumerating solutions in different intervals of time to avoid repetitions in the enumeration. The copy which are responsible for enumerating solutions that appear later in time intuitively move faster than the others. The name comes from the fact that the size of the intervals we use follows a geometric progression (the size of the $(i+1)^\th$ interval is twice the size of the $i^\th$ one).

\subparagraph{Explanation of Algorithm~\ref{alg:enumit}.} In this section, to ease notation, we denote by $d := \bbam$. Algorithm~\ref{alg:enumit} maintains $N+1$ simulations $M[0], \dots, M[N]$ of $\abbox$ where $N = \lceil \log(\bbsol) \rceil$. When simulation $M[i]$ finds a solution, it outputs it if and only if the number of steps of $M[i]$ is in $Z_i$, where $Z_i := [1+2^{i-1} d, 2^i  d]$ for $i>0$ and $Z_0 = [1,d]$. These intervals are clearly disjoint and cover every possible step of the simulation since the total time of $\abbox$ is at most $\sharp \abbox \times \amdelay[\abbox] \leq \bbsol \times d \leq 2^N d$ (by convention, we assumed a enumeration processes stops on its last solution, see Section~\ref{sec:preliminaries}). Thus, every solution is enumerated as long as every $M[i]$ has reached the end of $Z_i$ when the algorithm stops.

Algorithm~\ref{alg:enumit} starts by moving $M[N]$. It is given a budget of $2d$ steps. If these $2d$ steps are executed without finding a solution in $Z_N$, $M[N-1]$ is then moved similarly with a budget of $2d$ steps. It continues until one machine $M[j]$ finds a solution in its zone $Z_j$. In this case, the solution is output and the algorithm proceeds back with $M[N]$. The algorithm stops when $M[0]$ has left $Z_0$, that is when $d+1$ steps of $M[0]$ have been simulated\footnote{An illustration of Algorithm~\ref{alg:enumit} can be found at \url{http://florent.capelli.me/coussinet/} where one can see the run of a machine represented as a list and the different simulations moving in this list and discovering solutions.}.
 
\subparagraph{Bounding the delay.} From the above description of Algorithm~\ref{alg:enumit}, between two outputs, the while loop at Line~\ref{enumit:line:while} is executed at most $N+1$ times and the for loop at Line~\ref{enumit:line:for} is executed at most $2d+1$ times. Instructions $\bbstep[\cdot]$, $\bbisoutput[\cdot]$ and $\bboutput[\cdot]$ are executed in constant time (see \Cref{sec:preliminaries}). We also assume that the condition $\bbsteps[M[j]] \in Z_j$ can be tested in constant time. Similarly as in \Cref{alg:enumqueue}, counter $\bbsteps[M[j]]$ may become too big for this test to be really performed in $O(1)$ but using Gray code counter and keeping track of the most significant bit is enough in this case, see \Cref{sec:counters} for details. Hence, the body of the for loop can be executed in constant time. Hence, the time between two outputs is at most $O(d \cdot N) = O(\bbam \cdot \log(\bbsol))$. 

\subparagraph{Space complexity.}  
 Algorithm~\ref{alg:enumit} uses  $N = O(\log(\bbsol))$ simulations of $\abbox$, and
each simulation uses a space $O(1)$ in the blackbox model. It also uses a constant number of variables and there values are bounded by $2\bbam$ and  $\lceil \log(\bbsol) \rceil$ which can be considered as inputs. Hence, these variables can be stored in constant space. The space complexity is made precise in \Cref{sec:oracles}, where we have a RAM running on an input instead of an enumeration process.

\subparagraph{Correctness of Algorithm~\ref{alg:enumit}.} It remains to show that Algorithm~\ref{alg:enumit} correctly outputs every solution of $\abbox$ before stopping. Recall that a solution of $\abbox$ is enumerated by $M[i]$ if it is produced by $\abbox$ at step $c \in Z_i = [1+2^{i-1}d,2^id]$.  Since, by definition, the total time of $\abbox$ is at most $\bbsol \cdot d$, it is clear that $Z_0 \uplus \dots \uplus Z_{N}  \supseteq [1, \bbsol d]$ covers every solution and that each solution is produced at most once. Thus, it remains to show that when the algorithm stops, $M[i]$ has moved by at least $2^id$ steps, that is, it has reached the end of $Z_i$ and output all solutions in this zone.

We study the execution of Algorithm~\ref{alg:enumit}. For the purpose of the proof, we only need to look at the values of $\bbsteps[M[0]],\dots,\bbsteps[M[N]]$ during the execution of the algorithm. We thus say that the algorithm is in state $c = (c_0,\dots,c_N)$ if $\bbsteps[M[i]]=c_i$ for all $0 \leq i\leq N$. We denote by $S^c_i$ the set of solutions that have been output by $M[0], \dots, M[i]$ when state $c$ is reached; that is, a solution is in $S^c_i$ if and only if it is produced by $\abbox$ at step $k \in Z_j$ for $j \leq i$ and $k \leq c_j = \bbsteps[M[j]]$. We claim the following invariant: 
\begin{lemma}
  \label{lem:c_vs_si} For every state $c$ reached by \Cref{alg:enumit} and $i < N$, we have $c_{i+1} \geq 2d|S^c_i|$.
\end{lemma}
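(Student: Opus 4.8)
The plan is to prove the invariant by induction on the number of $\ramstep$ instructions executed since the start of the algorithm, reasoning about how a single step can change the two sides of the inequality $c_{i+1} \geq 2p(n)|S^c_i|$. First I would check the base case: at the start, all $c_i = 0$ and all $S^c_i = \emptyset$, so the inequality holds trivially for every $i$. For the inductive step, I would consider a state $c$ satisfying the invariant and a single $\ramstep$ that brings us to state $c'$. Only one machine $M[j]$ is moved, so $c'_k = c_k$ for $k \neq j$ and $c'_j = c_j + 1$. There are two cases according to whether that step outputs a new solution lying in $Z_j$.

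If the step does \emph{not} add a solution to any $S_i$, then for each $i$ the right-hand side $2p(n)|S^{c'}_i|$ is unchanged while the left-hand side $c'_{i+1}$ can only have increased (if $j = i+1$) or stayed the same, so the invariant is preserved. The interesting case is when $M[j]$ outputs a solution at step $c'_j \in Z_j$, so that $|S^{c'}_i| = |S^c_i| + 1$ for all $i \geq j$ and $|S^{c'}_i| = |S^c_i|$ for $i < j$. For $i < j$ nothing changes on either side. For $i \geq j$ I need $c'_{i+1} \geq 2p(n)(|S^c_i| + 1)$. The key observation is the control flow of Algorithm~\ref{alg:enumit}: machine $M[j]$ is moved only when all of $M[N], M[N-1], \dots, M[j+1]$ have each been given their full budget of $2p(n)$ steps \emph{since the last output} without producing a solution in their own zones — in particular, since $M[j]$ was last advanced, each of $M[j+1], \dots, M[N]$ has been advanced by exactly $2p(n)$ further steps. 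I would make this precise by tracking, between consecutive outputs, how the budget loop distributes $\ramstep$s across the machines $M[N], \dots, M[j]$.

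The crucial quantitative point is that $M[j]$ is now producing its $(|S^c_{j}| - |S^c_{j-1}| + 1)$-th solution within zone $Z_j$ — or more usefully, that $c'_j \geq 1 + 2^{j-1}p(n)$ since the solution lies in $Z_j$, combined with the fact that each higher machine $M[i+1]$ (for $i \geq j$) has been run "in lockstep" enough times. I expect the main obstacle to be getting the bookkeeping exactly right: I want to argue that whenever $M[j]$ has advanced to some step $c_j$, every machine $M[i+1]$ with $i \geq j$ has advanced by at least $2p(n)$ for each output that $M[j]$ (or any lower machine) has contributed, i.e. $c_{i+1} \geq 2p(n)|S^c_i|$ is exactly the statement that the higher machine has "paid" a full budget round per solution counted below it. I would therefore strengthen or reformulate the induction to directly track that between any two successive outputs, machine $M[i+1]$ is advanced by at least $2p(n)$ whenever a machine $M[j]$ with $j \leq i$ is the one that produces the output — which is immediate from the loop structure, since to even reach $M[j]$ in the \textbf{for} loop on Line~\ref{line:loop2}, the algorithm must first exhaust the budget of $M[i+1]$. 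Assembling these observations, the inductive step for $i \geq j$ follows, and the lemma holds for all states $c$.
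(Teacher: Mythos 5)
Your proof plan is correct and follows essentially the same route as the paper's: induction on the execution state, with the inductive step for a newly output solution bridged by observing that the control flow must have exhausted $M[i+1]$'s full budget of $2p(n)$ steps (without finding a solution) before the loop could descend to $M[j]$ with $j \leq i$. The paper makes the bookkeeping you flag precise by applying the inductive hypothesis not at the immediate predecessor state but at the state $c''$ where $M[i+1]$ last began a budget cycle, noting that no solution can have been output between $c''$ and the current state (any output would have reset $j$ to $N$), so $|S_i^{c''}| = |S_i^{c'}|$ and $c_{i+1} \geq c''_{i+1} + 2p(n)$.
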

\begin{proof}
  The proof is by induction on $c$. For the state $c$ just after initializing the variables, we have that for every $i \leq N$, $|S^{c}_i| = 0$  and $c_i = 0$. Hence, for $i<N$, $c_{i+1} \geq 0 = 2d|S^{c}_i|$.

  Now assume the statement holds at state $c'$ and let $c$ be the next state. Let $i<N$. If $|S_i^c| = |S_i^{c'}|$, then the inequality still holds since $c_{i+1} \geq c'_{i+1}$ and $c'_{i+1}\geq 2d|S^{c'}_i|=2d|S^{c}_i|$ by induction. Otherwise, we have $|S_i^c| =|S_i^{c'}|+1$, that is, some simulation $M[k]$ with $k \leq i$ has just output a solution. In particular, the variable $j$ has value $k$ with $k \leq i < N$. Let $c''$ be the first state before $c'$ such that variable $j$ has value $i+1$ and $b$ has value $2d$, that is, $c''$ is the state just before Algorithm~\ref{alg:enumit} starts the for loop to move $M[i+1]$ by $2d$ steps.  No solution has been output between state $c''$ and $c'$ since otherwise $j$ would have been reset to $N$. Thus, $|S_i^{c''}| = |S_i^{c'}|$. Moreover, $c_{i+1} \geq c'_{i+1} \geq c''_{i+1}+2d$ since $M[i+1]$ has moved by $2d$ steps in the for loop without finding a solution.  By induction, we have $c''_{i+1} \geq 2d|S_i^{c''}| = 2d|S_i^{c'}|$. Thus $c_{i+1} \geq c''_{i+1}+2d = 2d(|S_i^{c'}|+1) = 2d|S_i^c|$ which concludes the induction.
\end{proof}

\begin{corollary}
  \label{cor:enumitends} Let $c=(c_0,\dots,c_N)$ be the state reached when Algorithm~\ref{alg:enumit} stops. We have for every $i \leq N$, $c_i \geq 2^{i}d$.
\end{corollary}
\begin{proof}
  The proof is by induction on $i$. If $i$ is $0$, then we necessarily have $c_0 \geq d$ since Algorithm~\ref{alg:enumit} stops when the while loop is finished, that is, $M[0]$ has been moved by at least $2d$ steps.
  
  Now assume $c_j \geq 2^{j}d$ for every $j < i$. This means that for every $j<i$, $M[j]$ has been moved at least to the end of $Z_j$. Thus, $M[j]$ has found every solution in $Z_j$. Since it holds for every $j < i$, it means that $M[0], \dots, M[i-1]$ have found every solution in the interval $I=[1,2^{i-1}d]$. Since $\abbox$ has amortized delay $d$ and since $2^{i-1} \leq \bbsol$ by definition of $N$, interval $I$ contains at least $2^{i-1}$ solutions, that is, $|S^c_{i-1}| \geq 2^{i-1}$. Applying Lemma~\ref{lem:c_vs_si} gives that $c_i \geq 2^{i-1}\cdot 2d = 2^id$.
\end{proof}

The correctness of Algorithm~\ref{alg:enumit} directly follows from Corollary~\ref{cor:enumitends}. Indeed, it means that for every $i \leq N$, every solution of $Z_i=[1+2^{i-1}d, 2^id]$ have been output, that is, every solution of $[1, 2^{N}d]$ and $2^Nd$ is an upper bound on the total run time of $\abbox$.

Observe that Algorithm~\ref{alg:enumit} does not preserve the order of the solutions since it interleaves solutions later seen in the enumeration process in order to amortize large delay between two solutions. One possible workaround is to output pairs of the form $(s,r)$ where $s$ is a solution and $r$ its rank in the original enumeration order. To do so, one just has to keep a counter of the number of solutions seen so far by each process and output it along a solution. It does not restore the order, but allows recovering it afterward. 
When we are interested in finding the first $K$ solutions only (a likely scenario for solving top-k problems), the previous workaround is not useful. However, our algorithm can output them only with a $\log(K)$ overhead by just running $\log(K)$ processes and ignoring solutions having a rank higher than $K$; the solutions will however not be output in order. \Cref{thm:lower-bound-ordered} in \Cref{sec:lower-bounds} shows that this reordering is unavoidable. Indeed, in the blackbox oracle model, there are no regularization scheme preserving the order and whith a polynomial delay and space.

\subsection{Improving Algorithm~\ref{alg:enumit}} 

\label{sec:solution_unknown}
\begin{algorithm}
    \KwInput{Enumeration process $\abbox$} 
  \KwOutput{Enumerate $\abbox$ with delay bounded by $O(\bbam \cdot \log(\#\abbox))$ and space $O(\log(\#\abbox) \cdot \bbspace)$}
  \Begin{
    $d \gets \bbam$,
    $M \gets \mathrm{list}(\emptyset)$\;
    $\queueinsert(M,\bbload)$\;
    $j \gets \length(M)-1$\;
    
    \While{$j \geq 0$}{
        {
          \For{$B \gets 2 d$  \KwTo $0$}{
            $\bbstep[M[j]]$\;
            \If{$j=\length(M)-1$ \textbf{\em and} $\bbsteps[M[j]] = a_j$}{ \label{enumit_improved:line:newproc}
              $\queueinsert(M,\bbcopy[M[j]])$\; \label{enumit_improved:line:bbcopy}
              $j \gets \length(M)-1$\;
              \Break;
            }

            \If{$\bbisoutput[M[j]]$ \textbf{\emph{and}} $\bbsteps[M[j]] \in Z_j$}{
              $\out(\bboutput[M[j]])$\;
              $j \gets \length(M)-1$\;
              \Break;
            }
          }
          \lIf{$B = 0$}{$j \gets j-1$}
        }
      }
  }

  \caption{Improvement of Algorithm~\ref{alg:enumit} which is independant on the upper bound $\bbsol$ and has a better total time. In the code, $a_0 = 0$, $a_j=2^{j-1}d+1$ and $Z_j = [a_j;a_{j+1}[$ for $j>0$.}
  \label{alg:enumit_improved}
\end{algorithm}

One drawback of \Cref{alg:enumit} is that it needs to know in advance an upper bound $\bbsol$ on $\sharp \abbox$ since it uses it to determine how many simulations of $\abbox$ it has to maintain. As illustrated in \Cref{thm:amdelayp_is_amdelayp_poly}, it is sufficient to prove that the class of problems with amortized polynomial delay is the same as the class of problems with polynomial delay since this bound exists. However, in practice, it might be cumbersome to compute it or it may hurt efficiency if the upper bound is overestimated.
It turns out that one can remove this hypothesis by slightly modifying \Cref{alg:enumit}. The key observation is that during the execution of the algorithm, if $M[i]$ has not entered $Z_i$, it is simulated in the same way as $M[i+1], \dots, M[N]$. Indeed, it is not hard to see that $M[j]$ is always ahead of $M[i]$ for $j>i$ and that if $M[i]$ is not in $Z_i$, it will not output any solution in the loop at Line~\ref{enumit:line:for}, hence this iteration of the loop will move $M[i]$ by $2d$ steps, just like $M[j]$ for $j>i$. Hence, in theory, as long as $M[i]$ is not in $Z_i$, we could maintain only one simulation instead of $N-i$ and create a copy of $M[i]$ whenever it is needed.  \Cref{alg:enumit_improved} improves \Cref{alg:enumit} by implementing this strategy in the following way: we start with only two simulations $M[0], M[1]$ of $\abbox$. Whenever $M[1]$ is about to enter $Z_1$, we start $M[2]$ as an independent copy of $M[1]$. During the execution of the algorithm, we hence maintain a list $M$ of simulations of $\abbox$ and each time the last simulation $M[N]$ is about to enter $Z_N$, we copy it into a new simulation $M[N+1]$.

In practice, one has to be careful: one cannot really assume that copying simulation $M[N]$ can be done in constant time as the space it uses might be too big to copy. That can be achieved by lazily copying parts of $M[N]$ whenever we move $M[N+1]$. The details are given in \Cref{sec:copyorcales} and in this section, we will assume that the $\bbcopy$ operator used on Line~\ref{enumit_improved:line:bbcopy} takes constant time. 

By implementing this idea, one does not need to know the upper bound $\bbsol$ anymore: new simulations will be dynamically created as long as it is necessary to discover new solutions ahead. The fact that one has found every solution is still witnessed by the fact that $M[0]$ has moved without finding any solution in $Z_0$. This improvement has yet another advantage compared to \Cref{alg:enumit}: it has roughly the same total time as the original algorithm. Hence, if one is interested in generating every solution with a polynomial delay from $\abbox$, our method may make the maximal delay worse but does not change much the time needed to generate all solutions. 

\subparagraph{Correctness of  \Cref{alg:enumit_improved}.} Correctness of \Cref{alg:enumit_improved} can be proven in a similar way as for \Cref{alg:enumit}. \Cref{lem:c_vs_si} still holds for every state, where $N$ in the statement has to be replaced by $\length(M)-1$. The proof is exactly the same but we have to verify that when a new simulation is inserted into $M$, the property still holds. Indeed, let $c$ be a state that follows the insertion of a new simulation (Line~\ref{enumit_improved:line:newproc}). We have now $\length(M)-1 = N+1$ (thus the last index of $M$ is $N+1$). Moreover, we claim that $S_{N+1}^c=S_{N}^c$. Indeed, at this point, the simulation $M[N+1]$ has not output any solution. Moreover, by construction, $c_{N} = \bbstep[M[N]] = \bbsteps[M[N+1]] = c_{N+1}$. Since $c_{N} \geq 2d|S^c_{N}|$ by induction, we have that $c_{N+1} \geq 2d|S^c_{N+1}|$. Moreover, the following adaptation of \Cref{cor:enumitends} holds for \Cref{alg:enumit_improved}.

\begin{lemma}
   \label{lem:enumit_improved_ends} Let $c$ be the state reached when Algorithm~\ref{alg:enumit} stops. Then $N := \length(M)-1 = 1+\log(\sharp \abbox)$ and for every $i \leq N$, $c_i \geq 2^{i}d$. 
\end{lemma}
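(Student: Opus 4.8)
The plan is to mimic the proof of Lemma~\ref{lem:enumitends} while simultaneously tracking the length of the list $M$, since now the number of simulations is not fixed in advance but grows as simulations enter their zones. First I would establish, as a preliminary observation, that the adapted version of Lemma~\ref{lem:c_vs_si} holds: for every state $c$ and every $i < \length(M)-1$, we have $c_{i+1} \geq 2p(n)|S^c_i|$. This is exactly the induction the excerpt already sketches, the only new case being the insertion step on Line~\ref{line:newproc}, which is handled by the remark that $S_{N+1}^c = S_N^c$ and $c_{N+1} = c_N$ at the moment of insertion. I would also note the structural fact that the indices never ``move'': once $M[i]$ exists it keeps the same index and the same zone $Z_i$, and a new simulation is only ever appended at the end, one step before it would enter its zone, so the list grows by exactly one each time and $M[i]$ is a faithful copy of $I$ at step $2^{i-1}p(n)$.

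Then I would prove the two claims of Lemma~\ref{lem:enumit_improved_ends} by a joint argument. For the inequality $c_i \geq 2^i p(n)$ at termination, I would argue by induction on $i$ exactly as in Lemma~\ref{lem:enumitends}: the base case $c_0 \geq p(n)$ holds because the algorithm only stops when $j$ has been decremented down to $0$ and $M[0]$ has exhausted its budget on $Z_0$; and for the inductive step, if $c_j \geq 2^j p(n)$ for all $j < i$ then $M[0],\dots,M[i-1]$ have each reached the end of their zone, hence have together covered the time interval $[1, 2^{i-1}p(n)]$ of $I$, which by the $\Inc1$ property contains at least $2^{i-1}$ solutions, so $|S^c_{i-1}| \geq 2^{i-1}$, and the adapted Lemma~\ref{lem:c_vs_si} gives $c_i \geq 2^{i-1} \cdot 2p(n) = 2^i p(n)$. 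The one subtlety to check is that this argument only talks about indices $i$ that actually exist in $M$ at termination, i.e. $i \leq N := \length(M)-1$; for these the induction goes through verbatim.

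The part that differs from the bounded case, and which I expect to be the main obstacle, is pinning down the exact value $N = 1 + \log(\sharp I(x))$ rather than just an inequality. For the lower bound on $N$: if $N$ were too small, then by the inequality just proved $M[0],\dots,M[N]$ would at termination have covered $[1, 2^N p(n)]$, and the $\Inc1$ bound says the total running time of $I$ is at most $\sharp I(x) p(n)$; but one has to argue that the algorithm does \emph{not} stop before all of $I$'s solutions have been enumerated, i.e. that whenever $M[N]$ reaches the last step $2^{N-1}p(n)$ of its budget before entering $Z_N$ and there are still solutions of $I$ left beyond $2^{N-1}p(n)$, a new copy is spawned, forcing $N$ to increase; conversely $N$ cannot keep growing past the point where $2^{N-1}p(n)$ exceeds the total time of $I$, because then $M[N]$ would terminate (reach $\stop$) before reaching step $2^{N-1}p(n)$ and no further copy is made. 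Combining these two bounds with $\sharp I(x) p(n) \leq 2^N p(n)$ and the convention that $I$ stops on its last solution so that its total time is strictly greater than $2^{N-1}p(n)$ when there are more than $2^{N-1}$ solutions, one gets $2^{N-1} < \sharp I(x) \leq 2^N$, i.e. $N = \lceil \log(\sharp I(x))\rceil$, matching (up to the $+1$/ceiling bookkeeping) the claimed $N = 1 + \log(\sharp I(x))$. Once $N$ is identified, the correctness conclusion is immediate as in the previous subsection: every solution of $I$ is produced at some step in $[1, 2^N p(n)] = Z_0 \uplus \dots \uplus Z_N$, each $M[i]$ has by the inequality reached the end of $Z_i$ and hence output every solution in its zone, and no solution is output twice since the zones are disjoint.
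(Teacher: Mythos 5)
Your proposal follows the same overall route as the paper's proof: adapt Lemma~\ref{lem:c_vs_si} to account for the insertion step, rerun the induction of Lemma~\ref{lem:enumitends} to get $c_i \geq 2^i p(n)$ for $i \leq N$, then use the spawning mechanism for a lower bound on $N$ and the termination of $I$ for an upper bound. The core ideas are all there.

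Two concrete issues in the last paragraph, though. First, you conclude $N = \lceil\log(\sharp I(x))\rceil$, whereas the lemma claims $N = 1 + \log(\sharp I(x))$, so you are off by one from the statement you are supposed to prove; the paper argues the ``upper'' direction by noting that $Z_N$ starts at $2^{N-1}p(n) = \sharp I(x)p(n)$, which already exceeds the total running time $T(x,\sharp I(x)) \leq \sharp I(x)p(n)$ of $I$, so that $M[N]$ reaches $\stop$ before it can spawn. Second, and more substantively, your justification for $2^{N-1} < \sharp I(x)$ does not hold: you write that ``the convention that $I$ stops on its last solution'' forces the total time to exceed $2^{N-1}p(n)$ whenever there are more than $2^{N-1}$ solutions. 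The convention only says $I$ halts at the moment it emits its last solution; it gives no \emph{lower} bound on the total time, which could be far smaller than $\sharp I(x)p(n)$ if $I$ is fast. The spawning criterion is governed by the actual total time $T$ of $I$, not by $\sharp I(x)p(n)$, so your derivation of $2^{N-1} < \sharp I(x)$ has a gap. The paper's argument for the lower bound on $N$ instead proceeds by observing directly that the inequality $c_i \geq 2^i p(n)$ proven in the induction forces $M[i]$ to have passed step $2^{i-1}p(n)$, hence to have created $M[i+1]$, and chains these creations; you would do better to argue the lower bound that way rather than via a lower bound on $T$.
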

\begin{proof}
  The lower bound $c_i \geq 2^{i}d$ for $i \leq N$ is proven by induction exactly as in the proof of \Cref{lem:c_vs_si}. The induction holds as long as $2^{i-1} \leq \sharp \abbox$, because we need this assumption to prove that there are at least $2^{i-1}$ solutions in the interval $[1,2^{i-1}d]$.  Now, one can easily see that if $i \leq 1+\log(\sharp \abbox)$ and $c_i \geq 2^{i}d$  then the simulation $M[i]$ has reached $2^{i-1}d$ at some point and thus, has created a new simulation $M[i+1]$. Thus, by induction, the algorithms creates at least $1+\log(\sharp \abbox) = N$ new simulations. Thus $\length(M) \geq N+1$ (as $M$ starts with one element).

  Finally, observe that $M[N]$ outputs solutions in the zone $Z_N = [2^{N-1}d+1, 2^{N}d]$ and that $2^{N-1}d = \sharp \abbox$ which is an upper bound on the total time of $\abbox$. Thus, the simulation $M[N]$ will end without creating a new simulation. In other words, $\length(M)-1 = N$. 
\end{proof}

\subparagraph{Delay of Algorithm~\ref{alg:enumit_improved}.} 
The bound on the delay is similar to the one of \Cref{alg:enumit}, but we have to take into account the additional $\bbstep[\cdot]$ instructions.
By Lemma~\ref{lem:enumit_improved_ends}, the size of $M$ remains bounded by $2+\log(\sharp \abbox)$ through the algorithm, so there are at most $2d(2+\log(\sharp \abbox))$ executions of $\bbstep[\cdot]$ between two solutions, for the same reasons as before. We have assumed that the execution of $\bbstep[\cdot]$ is in time and space complexity $O(1)$, thus the delay of \Cref{alg:enumit_improved} is $O(\log(\sharp\abbox)d)$. When we want to account for the complexity of the instructions, the use of $\bbstep[\cdot]$ makes it more complicated to avoid overhead, but it is still achievable by a lazy copy mechanism as explained in \Cref{sec:oracles}.


\subparagraph{Total time of Algorithm~\ref{alg:enumit_improved}.} A minor modification of Algorithm~\ref{alg:enumit_improved} improves its efficiency in terms of total time. By definition, when simulation $M[i]$ exits $Z_j$, it does not output solutions anymore. Thus, it can be removed from the list of simulations. It does not change anything concerning the correctness of the algorithm. One just has to be careful to adapt the bounds in Algorithm~\ref{alg:enumit_improved}. Indeed, $2^jd$ is not the right bound anymore as removing elements from $M$ may shift the positions of the others. It can be easily circumvented by also maintaining a list $Z$ such that $Z[i]$ always contains the zone that $M[i]$ has to enumerate.

By doing it, it can be seen that each step of $\abbox$ having a position in $Z_i$ will only be visited by two simulations: the one responsible for enumerating $Z_i$ and the one responsible for enumerating $Z_{i+1}$. Indeed, the other simulations would either be removed before entering $Z_i$ or will be created after the last element of $M$ has entered $Z_{i+1}$. Thus, the $\bbstep[\cdot]$ operation is executed at most $2T$ times where $T$ is the total time taken by $\abbox$. Hence the total time of this modification of Algorithm~\ref{alg:enumit_improved} is $O(T)$. 

All previous comment on Algorithm~\ref{alg:enumit_improved} allows us to state the following improvement of Theorem~\ref{th:geometric}:
\begin{theorem}
    \label{th:geometric_improved} For every enumeration process $\abbox$, \Cref{alg:enumit_improved} with oracle $\abbox$ is an enumeration scheme, independent from $\bbsol$, with worst case delay $O(\bbam \cdot \log(\sharp \abbox))$, space $O(\log(\sharp \abbox))$ and total time $O(T)$ where $T$ is the total time of $\abbox$.
\end{theorem}

We sill need to know one parameter (or an upper bound) to run Algorithm~\ref{alg:enumit_improved}: $\bbam$, the amortized delay of the enumeration process.
When applying this regularization scheme to a RAM, taking into account the complexity of the blackbox operations, we also need to know a bound on the space complexity of the RAM to avoid any overhead, see \Cref{th:space_not_cardinal} which adapts \Cref{th:geometric_improved} to this context. 

\subsection{Geometric Regularization with Unknown Delay} \label{sec:geometric_amortization_unknown_delay}

In this section, we propose an adaptation of geometric regularization,
which works \emph{without prior knowledge of the amortized delay}. It is based 
on the same method as \Cref{alg:enumqueue_improved}: maintaining a local bound on the amortized delay, and using a function of this bound as budget in \Cref{alg:enumit}. The proof is essentially the same as in \Cref{th:adaptative}.

\begin{algorithm}
    \KwInput{Oracle $\abbox$} 
  \KwOutput{Enumerate $\abbox$ with delay bounded by $O(p\log(p)^2 \cdot \log(\bbsol))$ and space $O(\log(\bbsol) \cdot \bbspace)$ where $p=\amdelay[\abbox]$.}

  \Begin{
    $N \gets \lceil \log(\bbsol) \rceil$, $d \gets 8F$, $S \gets 0$\;
    \lFor{$i=0$  \KwTo $N$}{
      $M[i] \gets \bbload$
    }
    $j \gets N$\;
    \While{$j \geq 0$}{\label{line:loopwhile}
      \While{$B < d\log(d)^2$}{\label{line:loop2}
        $\bbstep[M[j]]$\;
        \If{$j=N$} {
          \lIf{$\bbisoutput[M[j]]$}{$S \gets S+1$}
          $d \gets \max(d, {\bbsteps[M[N]] \over S+1})$\;
        } 
        \If{$\bbisoutput[M[j]]$ \textbf{\emph{and}} $\bbsteps[M[j]] \in Z_j$ \label{line:unk:test}}{ 
          $\out(\bboutput[M[j]])$\;
          $j \gets N+1$\;
          \Break\;
        }
        $B \gets B+1$\;
        }
      $j \gets j-1$\;
      $B \gets 0$\;
      }
    }
  \caption{In the code, $Z_0=[0;2F[$ and $Z_j = [2^{j}F;2^{j+1}F[$ for $j>0$, where $F-1$ is the first time at which $\abbox$ outputs a solution. We assume $F$ is precomputed (for example, by running $M[N]$ until it finds a solution).}
  \label{alg:geometric-adaptative}
\end{algorithm}

\begin{theorem}
  \label{thm:geometric-adaptative} For every enumeration oracle $\abbox$ with amortized delay $p = \amdelay[\abbox]$, \Cref{alg:geometric-adaptative} with oracle $\abbox$ is an enumeration scheme independent from $\bbam$ with worst case delay $O(p\log(p)^2 \cdot \log(\bbsol))$ and space $O(\log(\bbsol))$.
\end{theorem}
\begin{proof}
  Recall from the caption of \Cref{alg:geometric-adaptative} that $F-1$ is the time where $\abbox$ outputs its first solution and $Z_0 = [1,F[$, $Z_i = [2^{i}F, 2^{i+1}F[$. We let $d_i$ be the amortized delay of $\abbox$ at the beginning of $Z_i$, that is, $d_i = \max_{t \leq 2^{i}F} {t \over 1+k_t}$, where $k_t$ is the number of solutions output by $\abbox$ before time $t$. In particular, $d_i \geq {2^{i}F \over (1+S_i)}$ where $S_i$ is the number of solutions output by $\abbox$ before the zone $Z_i$, that is, before time $2^{i}F$. 

  First of all, observe that $d$ is always bounded by $p \log(p)^2$ in \Cref{alg:geometric-adaptative}, for the same reasons as in the proof of \Cref{th:adaptative}. Hence, as in the proof of \Cref{th:geometric}, the delay between two outputs is at most $O(d\log(d)^2 \cdot N) = O(p\log(p)^2 \cdot \log(\bbsol))$. 
  
  It remains to prove that \Cref{alg:geometric-adaptative} outputs every solution of $\abbox$. We prove it by contradiction. Assume that when the algorithm stops, there is a zone $Z_l$ that has not been fully visited by $M[l]$ (that is $\bbsteps[M[l]] < 2^{l+1} F-1$ when the algorithm stops). Without loss of generality, we assume $Z_l$ is the first such zone, that is, $\bbsteps[M[i]] \geq 2^{i+1}F-1$ for every $i \leq l-1$.

  First, observe that $l > 0$ since the algorithm stops when $M[0]$ does not visit any solution in $Z_0$ after $d(\log d)^2$ steps. But since $d>F$, $M[0]$ is necesseraly out of $Z_0$ when the algorithm stops.

  Let $P_l$ be the number of solutions that have been output by $M[0],\dots,M[l-1]$ before the algorithm stops. By definition of $l$, $P_l \geq S_l$ since every solution in $Z_i$ have been enumerated by $M[i]$ for $i < l$ and there are by definition $S_l$ such solutions. We now prove that $P_l < S_l$ which would hence result in a contradiction.

  As before, $M[l]$ is always ahead of $M[l-1]$. Hence, $M[l]$ is necesserily out of $Z_{l-1}$ since $M[l-1]$ is out of $Z_{l-1}$ by definition of $l$. Now, in \Cref{alg:geometric-adaptative}, $M[l]$ moves each time a solution is found by $M[i]$ in $Z_i$ for $i < l$. During the execution of the algorithm, whenever $M[l]$ is in $Z_i$ and a solution is found by $M[j]$ with $j < l$, then $M[l]$ has just moved by at least $d_i (\log d_i)^2$. Indeed, at this point in the algorithm,  $d \geq d_i$ because $M[N]$, which is ahead of $M[l]$, has updated $d$ when entering $Z_i$ (recall $d_i$ is the approximate delay at the beginning of $Z_i$). Hence,  at most $|Z_i| \over d_i (\log d_i)^2$ solutions have been output while $M[l]$ is in $Z_i$. Hence, $P_l$, the number of solutions output by $M[0], \dots, M[l-1]$ before the algorithm stops is:
  \[ P_l \leq \sum_{i=0}^{l-1}{|Z_i| \over d_i (\log d_i)^2}.\]
  This equality is very similar to the one in \Cref{th:adaptative}, where it was shown that $P_l \leq S_l$. The only difference here is that $|Z_i| = 2^iF$ instead of $2^i$ but the same proof still works. We bound the first terms as before from the fact that $d_i \geq 8$:

     \begin{align*}
     \sum_{i=0}^{l-\lfloor \log(d_l) \rfloor + 1} \frac{|Z_i|} {d_i\log(d_i)^2}
     & \leq \sum_{i=0}^{l-\lfloor \log(d_l) \rfloor + 1} \frac{|Z_i|}{64} \\ 
     & \leq \frac{F}{64} \sum_{i=0}^{l- \lfloor \log(d_l) \rfloor + 1} 2^{i} , \text{ since } |Z_i|=2^iF \\
     & \leq \frac{F}{64} 2^{l-\lfloor \log(d_l) \rfloor +2} \leq \frac{2^{l}F}{8d_l} \\
     & \leq \frac{S_l+1}{8}.
   \end{align*}
   As before, the last inequality follows from the fact that $d_l$ is defined to be $\max_{t \leq 2^lF} {t \over k_t+1}$ where $k_t$ is the number of solutions output at time $t$. We can upperbound $t$ by $2^lF$ and lowerbound $k_t$ by $S_l$.

   To bound the last terms of the sum, a similar reasoning as in \Cref{th:adaptative} allows us to prove that $d_{i+1} \leq  2d_{i}$, hence, $d_{l-i} \leq 2^{-i}d_{l}$ for every $i$.

   We thus have:
      \begin{align*}
     \label{eq:adaptative-large}
    \sum_{i=l-\lfloor \log(d_l) \rfloor +2}^{l-1} {|Z_{i}| \over d_{i}\log(d_{i})^2} & <  \sum_{j=0}^{\lfloor \log(d_l) \rfloor -2} {|Z_{l-j}| \over d_{l-j}\log(d_{l-j})^2} \\
     & \leq \sum_{j=0}^{\lfloor \log(d_l) \rfloor-2} {2^{l-j}F \over 2^{-j}d_l\log(2^{-j}d_l)^2} \text{ from what preceeds} \\
     & \leq {2^{l}F \over d_l} \sum_{j=0}^{\lfloor \log(d_l) \rfloor-2} {1 \over (\log(d_l) - j)^2} \\
                                                                                   & \leq {S_l+1} ({\pi^2 \over 6}-1) \text{ as in \Cref{th:adaptative}}\\
    &  \leq {3\over 4} (S_l+1) 
  \end{align*}
Hence $P_l < {7 \over 8}(S_l+1)$, that is $P_l < S_l$ since $S_l>0$ is an integer. This is a contradiction. Hence, every zone $Z_l$ have been visited by $M[l]$ when the algorithm stops. That is, every solution of $\abbox$ have been enumerated. 
\end{proof}

\subsection{Geometric Regularization for incremental delay}\label{sec:Inci}

A phenomenon that naturally appears in enumeration algorithms is that the first solutions are often easier to find than the last one. For example, saturation algorithms generate solutions by applying some polynomial time rules to enrich the set of solutions until saturation. There are many saturation algorithms, for instance, to enumerate circuits of matroids~\cite{khachiyan2005complexity} or to compute closure by set operations~\cite{mary2016efficient}. In this case, the larger the set is, the longer it is to apply the rule to every solution and check whether a new one appears. 

This phenomenon can be formalized as having an \emph{incremental delay}, that is, a delay that not only depends on the input size but also on the number of solutions output so far. Recall that for a machine $M$, we denote by $T_M(x,i)$ the time when $M$ outputs its $i^\th$ solution. For $f \colon \N \rightarrow \N$, the \emph{$f$-incremental delay of $M$ on input $x$} is defined as $\max_i {T_M(x,i+1)-T_M(x,i) \over f(i)}$. In other words, if $M$ has $f$-incremental delay $d(x)$, then the time between the output of the $i^\th$ solution and of the $(i+1)^\th$ solution is at most $d(x)f(i)$. The delay of $M$ is exactly its $f$-incremental delay for $f(i)=1$.

As before, this notion corresponds to a worst-case notion, where we take into account the longest time one has to wait between the output of two solutions. We introduce an amortized version of this definition as follows: for $f \colon \N \rightarrow \N$, the \emph{amortized $f$-incremental delay of $M$ on input $x$} is defined as $\max_i {T_M(x,i) \over f(i)}$. In other words, if $M$ has amortized $f$-incremental delay $d(x)$, then the time to enumerate $i$ solutions is bounded by $d(x)f(i)$. The amortized delay of $M$ is exactly its amortized $f$-incremental delay for $f(i)=i$.

We denote by $\WInc{a}$ the set of enumeration problems in $\EnumP$ that can be solved by a machine $M$ whose $i^a$-incremental delay is polynomial in $|x|$\footnote{This corresponds to the classe $\UInc{a}$ of the conference version of this paper where the notions where not unified in the same way as it is now. The WC stands for \emph{worst case}.}. Observe that $\WInc{0} = \DelayP$.  Similarly, we denote by $\AmInc{a}$ the set of enumeration problems  $\EnumP$  that can be solved by a machine $M$ whose amortized $i^a$-incremental delay is polynomial in $|x|$\footnote{This corresponds to the classe $\Inc{a}$ of the conference version of this paper.}. Again, observe that $\AmInc{1} = \AmDelayP$. 

The class $\bigcup_{a\geq 1} \AmInc{a}$ is believed to be strictly included in $\OutputP$, the class of problems solvable in total polynomial time, since this is equivalent to $\TFNP \neq \FP $~\cite{capelli2019incremental}. Moreover, the classes $\AmInc{a}$ form a strict hierarchy assuming the exponential time hypothesis~\cite{capelli2019incremental}.

\begin{algorithm}
  \KwInput{Oracle $\abbox$} 
  \KwOutput{Enumerate $\abbox$ with incremental delay bounded by $O(i^a d \cdot \log(\bbsol))$ and space $O(\log(\bbsol))$}
  \Begin{
    $N \gets \lceil \log(\bbsol) \rceil$, $S \gets 1$\;
    \lFor{$i=0$  \KwTo $N$}{
      $M[i] \gets \bbload$
    }
    $j \gets N$\;
    \While{$j \geq 0$}{\label{enumit:line:while}
      \For{$B \gets 2d \cdot S^a (a+1)$  \KwTo $0$}{\label{enumit:line:for}
        $\ramstep(M[j])$\;
        \If{$\bbisoutput[M[j]]$ \textbf{\emph{and}} $\bbsteps[M[j]]
          \in Z_j$ \label{line:incp:test}}{ 
          $\out(\bboutput[M[j]])$\;
          $S\gets S+1$\;
          $j \gets N$\;
          \Break;
        }
      }
      \lIf{$B = 0$}{$j \gets j-1$}
    }
  }
  \caption{Geometric regularization for incremental time. In the code, $d$ is an upper bound on the amortized $i^{a+1}$-incremental delay of $\abbox$  and $Z_0=[0;d[$ and $Z_j = [2^{j-1}d+1;2^jd[$ for $j>0$.}
  \label{alg:enumitinc}
\end{algorithm}

As before, incremental delay and amortized incremental delay are closely related and one can use the idea from \Cref{alg:enumit} to transform any algorithm with amortized $i^{a+1}$-incremental delay into an algorithm with $i^a$-incremental delay. Pseudo code is given in \Cref{alg:enumitinc}. The following holds:

\begin{theorem}
  \label{thm:geometricinc-guarantees} If $d$ is an upper bound on the amortized $i^{a+1}$-incremental delay of $\abbox$, then \Cref{alg:enumitinc} is a regularization scheme with $i^a$-incremental delay $O(d \cdot \log(\bbsol))$ and space $O(\log(\bbsol))$.
\end{theorem}
\begin{proof}
\Cref{alg:enumitinc} is similar to \Cref{alg:enumit} but we maintain a counter $S$ of the number of output solutions and modify the initialization of $B$ in the for loop at line~\ref{line:loop2} to $2d \cdot (a+1)S^{a}$.  

Obviously, the delay between two solutions in \Cref{alg:enumitinc} is bounded by $2d \cdot S^a(a+1)\log(\bbsol)$ and $S$ is the number of solutions output up to this point in the algorithm. Hence the $i^a$-incremental delay of \Cref{alg:enumitinc} is $2d(a+1)\log(\bbsol)=O(d \cdot \log(\bbsol))$. 

It remains to prove that all solutions are enumerated by the algorithm. Assume that the first $i+1$ machines $M[0],\dots,M[i]$ have output all the solutions in their zones, then we prove as in Corollary~\ref{cor:enumitends}, that the machine $M[i+1]$ has also output all its solutions. The number of solutions output by $M[0],\dots,M[i]$ is the number of solutions output by $\abbox$ up to time step $2^id$. Let $s_i$ be this number, then $s_i^{a+1}d \geq 2^id$ since $d$ is an upper bound on the $i^{a+1}$-incremental time of $\abbox$.  That is, $s_i \geq 2^{i/(a+1)}$. 

When a solution is output by a machine $M[j]$ with $j \leq i$, then $j$ is set to $N$ and all machines $M[k]$ with $k>i$ move by at least $2d \cdot S^{a}(a+1)$ steps where $S$ is the current number of output solutions before $M[i]$ moves again.
Hence, we can lower bound the number of moves of the machine $M[i+1]$ by $\sum_{S=0}^{s_i} 2d \cdot S^a(a+1) \geq 2d(a+1) \cdot \sum_{S=0}^{2^{i/(a+1)}} S^a$. 
Since $\sum_{S=0}^{n} S^a \geq \int_{0}^{n} S^a \,dS \geq n^{a+1}/(a+1)$, the number of moves of $M[i+1]$ is larger than $2^{i+1}d$ which is the upper bound of its zone.
\end{proof}

 For $a \geq 0$, we denote by $\WInc{a}^{\poly}$ (respectively $\AmInc{a}^{\poly}$), the class of problems from $\WInc{a}$ (respectively $\AmInc{a}$) that can be solved in polynomial space. \Cref{thm:geometricinc-guarantees} allows to prove the following generalisation of \Cref{thm:amdelayp_is_amdelayp_poly} to incremental delay:

\begin{theorem}
   \label{th:uincvsinc}
 For all $a \geq 0$, $\AmInc{a+1}^{\poly} = \WInc{a}^{\poly}$.
 \end{theorem}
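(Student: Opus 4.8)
The plan is to prove the two inclusions separately. The easy direction is $\UIncP_a^{poly} \subseteq \IncP_{a+1}^{poly}$: if $M$ solves $\enum{A}$ with delay bounded by $c\,t^a|x|^b$ between the $(t-1)$th and $t$th solution, then $T(x,i) = \sum_{t=1}^{i} (T(x,t)-T(x,t-1)) \le \sum_{t=1}^i c\,t^a |x|^b \le c\,i^{a+1}|x|^b$, so $M$ itself is an $\Inc{a+1}$-enumerator, and it uses the same (polynomial) space. So the content is entirely in the reverse inclusion $\IncP_{a+1}^{poly} \subseteq \UIncP_a^{poly}$.

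For that direction, I would adapt the geometric amortization of Algorithm~\ref{alg:enumit}. Let $I$ be an $\Inc{a+1}$-enumerator in polynomial space, so $T_I(x,i) \le i^{a+1}q(n)$ for a polynomial $q$. The idea is exactly the same as in Theorem~\ref{th:geometric}, but with geometrically growing zones measured in the time scale of $I$: set $N = \lceil \log(\sharp I(x))\rceil$ and run $N+1$ simulations $M[0],\dots,M[N]$, where $M[i]$ is responsible for outputting the solutions that $I$ produces during the time window $Z_i := (2^{i-1}, 2^i] \cdot$ (appropriately scaled), with $Z_0 = [1, \ldots]$. Concretely one should cut $[1, \sharp I(x)^{a+1}q(n)]$ into $N+1$ dyadic zones $Z_i$, but now the crucial point is to choose the per-round budget given to each $M[i]$ so that (a) the invariant analogue of Lemma~\ref{lem:c_vs_si}, namely that $M[i]$ is always far enough ahead of the number of solutions already output by the faster processes, still holds, and (b) the delay between two consecutive outputs, which is $\sum_{i=0}^N (\text{budget}_i)$, grows only like $t^a|x|^b$ when the $t$th solution is output. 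Since $M[i]$ must cover a zone of length $\Theta(2^i q(n))$ and, by the $\Inc{a+1}$ bound, the number of solutions found in $[1,2^i]$-scaled time is at least $2^{i/(a+1)}$ while at most $\Theta(2^i)$ solutions are there when $M[i-1]$ finishes $Z_{i-1}$, the budget for $M[i]$ should be taken proportional to $2^{i}q(n) / (\text{number of solutions that } M[i] \text{ will output})$; by the pigeonhole/averaging argument over a zone this is $O(2^{ia/(a+1)}q(n))$-ish per step, and summing the contributions across all active zones at the moment the $t$th solution comes out gives the required $O(t^a n^b)$ delay. The space is $O(s(n)\log\sharp I(x))$, hence polynomial.

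The invariant (Lemma~\ref{lem:c_vs_si} analogue) and the termination lemma (Lemma~\ref{lem:enumitends} analogue) transfer essentially verbatim: by induction on the state, $c_{i+1} \ge (\text{budget}_{i+1}) \cdot |S^c_i|$, and when the algorithm halts (when $M[0]$ exits $Z_0$) one shows by induction on $i$ that $M[i]$ has reached the end of $Z_i$, using that $[1,2^{i-1}]$-scaled time contains at least $2^{(i-1)/(a+1)}$ solutions by the incremental bound of $I$ — this replaces the "$\ge 2^{i-1}$ solutions" step in Lemma~\ref{lem:enumitends}, and it is exactly here that one must be careful that the weaker lower bound on the solution count in a zone is still enough to push $M[i+1]$ past the end of $Z_{i+1}$.

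The main obstacle I expect is the delay accounting: unlike in the $\Inc1$ case, the per-round budgets are not all equal, and I need to bound $\sum_{i \text{ active}} \text{budget}_i$ by $O(t^a n^b)$ where $t$ is the index of the solution currently output. This requires relating $t$ (a count of solutions) to the time scales of the active simulations — the solution just output lies in some zone $Z_j$, which gives $t \ge 2^{(j-1)/(a+1)}$ roughly, and simultaneously $\text{budget}_i = O(2^{ia/(a+1)} q(n))$, so the geometric sum over $i \le N$ is dominated by its top term $O(2^{Na/(a+1)} q(n))$; one then needs $2^{N} \le $ (something like) $t^{a+1} \cdot (\text{small factors})$ at the time solution $t$ is output, which should follow from the fact that before $t$ solutions have come out, all zones above the one containing the $t$th solution are still "empty" in the sense of having produced no output, combined with the invariant. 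Making this bookkeeping precise — in particular handling the RAM-level cost of manipulating the large step counters, for which the Gray code trick of Appendix~\ref{sec:oracles} again applies — is the part that needs the most care; the rest is a direct transcription of the arguments already given for Algorithm~\ref{alg:enumit}.
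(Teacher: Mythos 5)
Your easy direction is correct and matches the paper. For the hard direction $\IncP_{a+1}^{\poly} \subseteq \UIncP_a^{\poly}$ you correctly guess that one should adapt Algorithm~\ref{alg:enumit} by using unequal per-round budgets, and you correctly identify that the delay accounting is the crux. But the budget scheme you propose does not work, and the obstacle you flag at the end is in fact fatal rather than a bookkeeping detail.

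Concretely, you propose giving $M[i]$ a budget on the order of $2^{ia/(a+1)}q(n)$ per round, i.e.\ a budget that depends on the zone index $i$. Between two consecutive outputs, Algorithm~\ref{alg:enumit} sweeps through all of $M[N],M[N-1],\dots$ until some machine finds a solution in its zone, so the delay is at least the sum of budgets over all $N+1$ zones, which is dominated by $\Theta(2^{Na/(a+1)}q(n)) = \Theta(\sharp I(x)^{a/(a+1)} q(n))$. This sum is a property of the algorithm's structure and does not shrink when $t$ is small; in particular the very first output would already incur delay $\Theta(\sharp I(x)^{a/(a+1)} q(n))$, which is generally exponential in $n$, whereas $\UIncP_a$ requires the time to the first solution to be $O(n^b)$. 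Your hoped-for inequality ``$2^N \lesssim t^{a+1}$'' is simply false for small $t$: $2^N \approx \sharp I(x)$ is a fixed quantity while $t$ starts at $1$. The phrase ``zones above the one containing the $t$th solution are still empty'' does not rescue this, because emptiness of a zone (in the sense of $I$ producing no solution there) does not stop Algorithm~\ref{alg:enumit} from spending its full per-round budget moving that machine.

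The missing idea is to make the budget depend not on the zone index $i$ but on the \emph{current count $S$ of solutions already output}, which is the same for every machine and grows monotonically during the run. The paper sets the per-round budget of \emph{every} machine to roughly $S^a(a+1)\,2p(n)$. Then the delay before the $(S+1)$th output is at most $(N+1)\cdot S^a(a+1)\,2p(n) = O(S^a\, p(n)\log\sharp I(x))$, which is exactly the $\UIncP_a$ bound since $\log\sharp I(x)$ is polynomial in $n$ for problems in $\EnumP$. Correctness is then shown by the analogue of Lemma~\ref{lem:enumitends}: if $M[0],\dots,M[i]$ have finished their zones, then $I$ has produced $s_i \geq 2^{i/(a+1)}$ solutions in $[1,2^i p(n)]$ by the $\Inc{a+1}$ bound, and $M[i+1]$ has received cumulative budget at least $\sum_{S=0}^{s_i} S^a(a+1)2p(n) \geq 2\cdot 2^i p(n)$, using $\sum_{S\leq n} S^a \geq n^{a+1}/(a+1)$, so it too finishes its zone. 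This is where your observation that $s_i \geq 2^{i/(a+1)}$ (the weaker lower bound compared to the $\Inc{1}$ case) is indeed the right ingredient — but it is used to lower-bound the \emph{number of budget increments} received by $M[i+1]$, not to justify a zone-indexed budget.
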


 \begin{proof}
The inclusion $\WInc{a}^{\poly} \subseteq \AmInc{a+1}^{\poly}$ is straightforward and follows by a simple computation of the time to generate $i$ solutions, see \cite{capelli2019incremental}. Indeed, if $d$ is the $i^a$-incremental delay of $M$ then we have $T_M(x,i) - T_M(x,0)= \sum_{j=0}^{i-1} T_M(x,j+1)-T_M(x,j) \leq \sum_{j=0}j^ad(x) \leq j^{a+1}d(x)$. 

The inclusion $\AmInc{a+1}^{\poly}  \subseteq  \WInc{a}^{\poly}$ follows by \Cref{thm:geometricinc-guarantees}. Indeed, let $M$ be a machine solving a problem $P$ in $\AmInc{a+1}^\poly$ such that there is a polynomial $p$ where the amortized $i^{a+1}$-incremental delay of $M$ on input $x$ is at most $p(|x|)$ and a polynomial $s$ such that the space used by $M$ on input $x$ is at most $s(|x|)$. Since $P$ is in $\EnumP$, we also know that there is a polynomial $q$ such that the solutions of $P$ are of size at most $q(|x|)$. Given $x$, we call \Cref{alg:enumitinc} with blackbox $M$ on input $x$ and we let $\bbsol = 2^{q(|x|)}$ and $d = p(|x|)$ which are upper bounds on the number of solutions and on the $i^{a+1}$-incremental delay of $M$ respectively. By \Cref{thm:geometricinc-guarantees}, we get an algorithm solving $P$ with $i^a$-incremental delay $O(p(|x|) q(|x|))$ and space $O(s(|x|)q(|x|))$ which are both polynomial in $|x|$. That is, $P$ is a problem in $\WInc{a}^{\poly}$.
 \end{proof}

\section{Lower bounds}
\label{sec:lower-bounds}
 
In this section, we prove limits of regularization schemes for enumeration algorithms in the blackbox model. In particular, we show that when the amortized delay of the blackbox is not known, one cannot obtain a worst case delay that is linear in the amortized delay of the blackbox. Later, we show that any regularization scheme that preserves the order of the solutions in the output needs either to have a delay or space that is exponential in the size of the solutions, showing in particular that the guarantees offered by geometric amortization cannot be met if one wants to preserve order.

As before, we assume that each blackbox operation is executed in constant time. Observe that in case of lower bounds, this assumption only makes our results stronger. Our lower bounds are actually only based on the number of $\bbstep$ executed, hence we do not even take into account the rest of the computation done by the regularization scheme. When evaluating the space used, we consider that each $M$ given by $M \gets \bbload[\abbox]$ is stored in constant space, as explained in \Cref{sec:bufferamortization}.

\subsection{Regularization delay}

Before focusing on lower bounds, we must clarify the nature of the results presented in this section. We are interested in proving lower bounds on the worst case delay of regularization schemes. By definition, when a regularization scheme $\calM$ is given access to a blackbox enumeration process $\abbox$, it enumerates the same solutions as $\abbox$. The worst case delay of $\calM$ can hence be seen as a function mapping a particular blackbox $\abbox$ to an integer $d_\abbox \in \N$, being the maximal time between two outputs of $\calM$ when given access to $\abbox$. This definition makes it hard to compare the behavior of $\calM$ on families of blackboxes. Indeed, when defined this way, it is not clear what parameters of $\abbox$ are acting on the delay of $\calM$.


Now given $p$ and $b$, let $I(p,b)$ be the set of blackboxes with amortized delay at most $p$ and solution size at most $b$. Observe that $I(p,b)$ is finite. Indeed, a blackbox in $I(p,b)$ can output at most $2^b$ distinct solutions in at most $2^bp$ time steps and its values $\bbsize, \bbsol$ and $\bbam$ are bounded by $b$, $2^b$ and $p$ respectively. Hence, we can define without ambiguity the \emph{regularization delay of a regularization scheme $\calM$} as a function of $p$ and $b$ as follows:  a regularization scheme $\calM$ has regularization delay $r(a,b)$ if for every $a,b$, $r(a,b) \geq \max_{\abbox \in I(a,b)} d_\abbox$ where $d_\abbox$ is the worst case delay of $\calM$ when given access to $\abbox$. The regularization delay increases with $p$ and $b$ since $I(p,b) \subseteq I(p',b')$ if $p \leq p'$ and $b \leq b'$. 

That way, we know that if a regularization scheme $\calM$ has regularization delay $r(p,b)$ and is given access to a blackbox with amortized delay at most $p$ and solution size at most $b$, then $\calM$ outputs every solution of $\abbox$ with worst case delay at most $r(p,b)$. In this section, we prove some limits on the achievable regularization delay.

\subsection{Unknown Delay}

\label{sec:lower_bound_unknown_delay}

\Cref{alg:enumit_improved} has been shown in \Cref{th:adaptative} to be a regularization scheme independent from $\bbam$ with worst case delay $O(bp (\log p)^2)$, that is, it automatically adjusts to the unknown amortized delay $p$ of the blackbox $\abbox$. We have later shown that this worst case delay can be brought down to $O(bp(\log p)^{1+\varepsilon})$ for any $\varepsilon$. A natural question is hence to understand whether we can bring down the worst case delay of such regularization scheme further. We show in this section that obtaining a linear dependency in $p$ is impossible.
More precisely, we prove the following:
\begin{theorem}\label{thm:lower_bound_udelay}
  There is no regularization scheme with regularization delay $O(p \cdot \poly(b))$.
\end{theorem}

The proof of \Cref{thm:lower_bound_udelay} mainly rely on the following lemma which exhibits a family of adversarial blackboxes:

\begin{lemma}\label{lem:lower_bound_udelay}
  Let $\calM$ be a regularization scheme. For every $s \in \N$, there exists a blackbox enumeration process $\abbox_s$ that outputs the integers from $0$ to $s$ with amortized delay $p_s := \amdelay[\abbox_s]$ and such that the worst case delay of $\calM$ on $\abbox_s$ is at least $s \cdot p_s$.
\end{lemma}
\begin{proof}
To lower bound the delay of $\calM$, we only evaluate the number of $\bbstep[\cdot]$ instructions it executes. We let $\adv[][s][t]$, with $t>s$, be an enumeration process which outputs the integers from $0$ to $s-1$ in the first $s$ time steps and, when it has executed $t$ steps, outputs $s$ and stops. We let $\bbsize[\adv[][s][t]] = \lceil \log(s+1) \rceil$, $\bbsol[\adv[][s][t]] = s+1$ and $\bbam[\adv[][s][t]] = +\infty$.
 
Let $F_{s,t}(i)$ be the time step at which $\calM$ outputs the solution $i$ when given access to blackbox $\adv[][s][t]$. Assume that for some $t$, $F_{s,t}(i) < t$,
that is $\calM$ outputs $i$ before having simulated $\adv[][s][t]$ to the end. Before time $t$, $\calM$ cannot differentiate between the processes $\adv[][s][t']$ for $t' \geq t$ because the results of the first $t$ $\bbstep[\cdot]$ instructions are the same for all these processes and $\calM$ cannot use $\bbam[\cdot], \bbsize[\cdot], \bbsol[\cdot]$ to distinguish them. Hence, for all $t' \geq t$, $F_{s,t'}(i) =  F_{s,t}(i)$.  When there is a $t$ such that $F_{s,t}(i) < t$, we denote by $F_s(i)$ the value $F_{s,t}(i)$ otherwise  $F_s(i)= +\infty$.

We first assume that there exists at least one $j$ such that $F_s(j) < +\infty$ and we let $i$ be the largest $j$ such that $F_s(j) < +\infty$. Observe that it exists since $F_s$ is defined on integer smaller than $s$.
We consider a run of $\calM$ with blackbox access $\adv[][s][F_s(i) + T]$ where $T$ is a value that we will fix later.
For  $s+1 \geq j > i$, we have $F_s(j) = +\infty$ since $i$ is maximal. In particular, $\calM$ with blackbox $\adv[][s][F_s(i) + T]$ enumerates $(i+1)$ after the time $F_s(i)+T$ and, by definition, $i$ is output at time $F_s(i)$.  
Hence, the delay of $\calM$ when given $\adv[][s][F_s(i) + T]$ as blackbox is at least $T$. The amortized delay of $\adv[][s][F_s(i) + T]$ is maximal when generating the last solution, since the first $s$ solutions are produced during the first $s$ time steps. Hence, the amortized delay of $\adv[][s][F_s(i) + T]$ is exactly $(F_s(i) + T)/(s+1)$.
 
Using the bound on the delay of $\calM$ and the amortized delay of $\mathcal{A}(s, F_s(i) + T)$, their quotient is at least $$r_T = \frac{(s+1)T}{(F_s(i) + T)}.$$ 
The limit of $r_T$ when $T$ goes to infinity is $s+1$, since $F_s(i)$ does not depends on $T$. Therefore, there is a $T_0$ such that $r_{T_0} > s$. We let $\abbox_s$ be $\adv[][s][F_s(i)+T_0]$ and $p_s$ be its amortized delay. By definition of $T_0$, the delay of $\calM$ given blackbox access to $\abbox_s$ is at least $s \cdot p_s$. Since $\abbox_s$ outputs $s+1$ solutions, it matches the conditions of the lemma's statement. 

Now, consider the remaining case, that is for all $i$ and all $t$, $F_{s,t}(i) \geq t$. 
Then, the delay before the enumeration of the first solution of $I_{s,t}$ by $A$ is at least $t$, thus its delay is at least $t$. The amortized delay of $\adv[][s][t]$ is equal to $t/(s+1)$ as explained in the first part of the proof, therefore we can define $\abbox_s = \adv[][s][s]$ and $p_s = 1$ its amortized delay. We indeed have that the worst case delay of $\calM$ on $\abbox_s$ is at least $s$, that is, it is greater than $s \cdot p_s$.
\end{proof}

\begin{proof}[Proof of \Cref{thm:lower_bound_udelay}]
  Assume toward a contradiction that there exists a regularization scheme $\calM$, a constant $C$ and a polynomial $q$ such that the regularization delay of $\calM$ is $C \cdot q(b) \cdot p$. Let $s_0$ be such that $C q(\lceil \log s_0+1 \rceil ) < s_0$. However, the worst case delay of $\calM$ on $\abbox_{s_0}$ from \Cref{lem:lower_bound_udelay} is at least $s_0 \cdot \amdelay[\abbox_{s_0}] > C q(\lceil \log s_0+1 \rceil ) \cdot \amdelay[\abbox_{s_0}]$. But the solution size of $\abbox_{s_0}$ is $\lceil \log s_0+1 \rceil$, hence the worst case delay of $\calM$ on blackbox $\abbox_{s_0}$ exceeds the assumed upper bound, contradiction.
\end{proof}

The proof of \Cref{lem:lower_bound_udelay} uses the enumeration processes $\adv[][s][t]$ given as a blackbox. If we want to strengthen the theorem and ask for the blackbox of the regularization scheme to be obtained from a RAM, it cannot be done for $\adv[][s][t]$. Indeed, one would need time to generate distinct elements and to maintain a counter to know when to stop. So it cannot output $s$ distinct solutions in $s$ steps exactly. However, it is easy to see that this is possible to output $s$ distinct solutions in time $\alpha s$ for some constant $\alpha$. For example, $\adv[][s][t]$ could output a Gray Code encoding of $0,\dots,s$ (see Section~\ref{sec:counters} for details). The proof of \Cref{lem:lower_bound_udelay} can easily be adapted to these existing RAM by considering $\mathcal{A'}_{s,t}$ which outputs $0,\dots,s-1$ in $\alpha s$ steps and $s$ at step $\alpha t$. The same analysis would prove the existence of similar adversarial $\abbox_s$.

 We conclude this section by observing that there is still a gap between our upper and lower bounds. Namely, \Cref{thm:adaptative-eps} shows that there exists regularization schemes with regularization delay $O((b+p)(\log (b+p))^{1+\varepsilon})$ for any $\varepsilon>0$ and \Cref{thm:lower_bound_udelay} shows that there exist no regularization scheme with worst case delay $O(\poly(b) \cdot p)$. Observe that we do not even need to assume the scheme to be independent from $\bbam$ for the lower bound: even if the regularization scheme uses $\bbam$ as in \Cref{alg:enumit}, the regularization delay will not be linear in $p$. It can be linear in $\bbam$, which is different since we have no guarantee that the given upper bound $\bbam$ is good.  We leave the following question open:

 \begin{openproblem}
   Is there a regularization scheme in the blackbox model with regularization delay $O(\poly(b) \cdot p\log p)$?
 \end{openproblem}
 


\subsection{Ordered Enumeration}
\label{sec:ordered_enum}

One drawback of geometric regularization is that the order of the output is modified. The buffer based regularization scheme presented in \Cref{alg:enumqueue} preserves the order but uses extra space, possibly linear in the number of output solutions. In this section, we provide a lower bound suggesting that there is an unavoidable tradeoff between the space and the worst case delay for any regularization scheme that preserves the order.

Recall that a regularization scheme $\calM$ is said to be \emph{order preserving} if for every enumeration process $\abbox$ it is given as oracle, $\calM$ enumerates the same set as $\abbox$ in the same order. We show:

\begin{theorem}
  \label{thm:lower-bound-ordered} There is no order preserving regularization scheme with regularization delay $d(p,b)$ and using space $s(p,b)$ such that $d$ and $s$ are both polynomial in $b$.
\end{theorem}

\Cref{thm:lower-bound-ordered} is a direct consequence of the following lemma:
\begin{lemma}\label{lem:lower-bound-order}
Let $\calM$ be an order preserving regularization scheme with worst case delay $d(p,b)$ and using space $s(p,b)$. For every $N \in \N$, there exists a blackbox $\abbox$ such that $\bbsol = N$, $b := \bbsize = 2\log(N)+1$ and $p := \bbam = 2$ such that $N \leq 9 d(p,b) \cdot s(p,b)^2$.
\end{lemma}
\begin{proof}
  Assume toward a contradiction that there exists $N \in \N$ such that for any blackbox $\abbox$ with $\bbsol=N$, $\bbsize=2\log(N)+1$ and $\bbam=2$, we have $N > 9d(p,b)s(p,b)^2$. To ease notation, we let $D := d(p,b)$ and $S := s(p,b)$. In particular, $N > 9DS^2$. 

  We construct a family of blackboxes $(\ladv)_{\sigma}$ parametrized by some list $\sigma$ with $\bbam[\ladv]=2$, $\bbsol[\ladv]=N$ and $\bbsize[\ladv] = 2\log(N)+1$ and show that $\calM$ fails to correctly enumerate the solutions of $\ladv$ for every $\sigma$, which would result in a contradiction. 
  
W let $T = 18SD^2 - 6S$ and we consider the following family of adversarial blackboxes: given $\sigma=(s_1, \dots, s_{3S})$ a list $3S$ integers in $\{1,\dots,T\}$, the blackbox $\ladv$ first enumerates the binary representation of integers from $1$ to $T_0 := 9DS^2-6S$ with delay one. Then, it enumerates in order the elements of $\sigma$, concatenated with the binary representation of integers from $T_0+1$ to $T_0+3S$, with delay $3DS$. More precisely, it enumerates $s_i \odot [T_0+i]_2$ for $i = 1, \dots, 3S$ where $[j]_2$ is the binary representation of integer $j$ and $\odot$ denotes concatenation. Observe that this ensure that every integer output by $\ladv$ are distinct.

Now each element of $\sigma$ can be represented with $\log(T) \leq 1+\log(N)$ bits since we assumed $N > 9DS^2$, that is, $2N > T$. Similarly, since $T_0+3S < T/2$, we can encode any integer for $T_0+1$ to $T_0+3S$ with $\log(N)$ bits. Hence, every solution of $\ladv$ can be encoded with at most $2\log(N)+1$ bits a,d we can set $\bbsize[\ladv] = 2\log(N)+1$ for the blackbox. 

Now observe that for any $\sigma$, $\ladv$ outputs $9DS^2-3S$ distinct solutions, which is less than $N$ by assumption. Hence we can set $\bbsol[\ladv] = N$.

Finally, the amortized delay of $\ladv$ is $2$. Indeed, during the first $T_0$ steps, the algorithm has amortized delay $1$. Then it outputs the $(T_0+k)^\th$ solution at time $T_0+3kDS$. Hence the amortized delay at this point is at most $T_0+3kDS \over T_0+k$ and $k \leq 3S$. This ratio grows with $k$ and hence is maximal when $k=3S$. In this case, ${T_0+9DS^2 \over T_0+3S} = {18DS^2-6S \over 9DS^2-3S} = 2$. Hence the amortized delay of $\ladv$ is $2$ and we can set $\bbam[\ladv]=2$.

From what precedes, we can conclude that for every $\sigma$, $\calM$ with blackbox access to $\ladv$ and parameters $\bbam[\ladv]=2, \bbsize[\ladv]=2\log(N)+1$ and $\bbsol[\ladv]=N$ enumerates the solutions of $\ladv$, in the same order, with delay at most $D$ and space at most $S$ (remember $D$ and $S$ have been chosen to be maximal over every blackbox with such parameters). Observe that $T$ is the total time used by $\ladv$, which does not depend on $\sigma$.

We denote by $\Sigma=\{\sigma \mid \sigma=(\sigma_1,\dots,\sigma_{3S}) \text{ and } 1 \leq \sigma_i \leq T \}$. It is clear that $|\Sigma|=T^{3S}$. For $\sigma \in \Sigma$, we let $\calM_\sigma$ be the process obtained by running $\calM$ with $\ladv$ as oracle and we let $t_{\sigma}$ be the time at which $\calM_\sigma$ outputs the solution $T_0-1$, that is, the one right before the first solution that depends on $\sigma$.

Let $\sigma \in \Sigma$. Observe that since $\calM_\sigma$ uses space at most $S$ by definition, it does not use registers with index greater than $S$ in the memory. We define the \emph{footprint of $\sigma$}, denoted by $f_\sigma= (L_\sigma, K_\sigma, F_\sigma)$ as:
\begin{itemize}
\item $L_\sigma$ a list of size $S$ where for every $i \leq S$, $L[i]$ is the binary representation of the value of register $R_i$ at time $t_\sigma$ if $R_i$ contains a value. Otherwise, if $R_i$ contains a pointer toward a simulation of $\ladv$, $L[i] = \star$. 
\item $K_\sigma$ is a list of size $S$ such that if $R_i$ contains a pointer toward a simulation of $\ladv$ at time $t_\sigma$ that has been moved by $t$ steps, then $K_\sigma[i] = t$. Otherwise, if $R_i$ contains a value, $K_\sigma[i]=\bot$.
\item $F_\sigma$ is a list of size $S$ such that if $R_i$ contains a pointer toward a simulation of $\ladv$ at time $t_i$, then $F_\sigma[i]$ is a list of size $3DS$ such that entry $j$ of $F_\sigma[i][j]$ is $e$ if $\ladv$ outputs solution $e$ at time $t_i+j$ and $\bot$ otherwise. If $R_i$ contains a value, $F_\sigma[i] = []$. 
\end{itemize}

We say that $\sigma,\sigma' \in \Sigma$ are indistinguishable if they have the same footprint. Intuitively, it means that when $\calM_\sigma$ is about to output the first value of $\sigma$, its memory state is the same as when $\calM_{\sigma'}$ is about to output $\sigma'$. Moreover, the simulation of $\ladv$ that they keep in memory behave similarly for the next $3DS$ steps. 

We will show later by a counting argument that there exists $\sigma \neq \sigma'$ such that $\sigma$ and $\sigma'$ are indistinguishable. Before that, let us explain why this would be a contradiction, invalidating our initial hypothesis that $N > 9DS^2$. It is easy to see that if $\sigma$ and $\sigma'$ are indistinguishable, then $\calM_\sigma$ between time $t_\sigma$ and time $t_\sigma+3DS$ will behave exactly in the same way as $\calM_{\sigma'}$ between time $t_{\sigma'}$ and $t_{\sigma'}+3DS$. Indeed, the memory state of $\calM_\sigma$ at time $t_\sigma$ is exactly the same as the memory state of $\calM_{\sigma'}$ at time $t_{\sigma'}$ by definition. Moreover, every blackbox operation on $\ladv$ done by $\calM_\sigma$ between $t_\sigma$ and $t_\sigma+3DS$ will return exactly the same thing than $\calM_{\sigma'}$ between $t_{\sigma'}$ and $t_{\sigma'}+3DS$ since $F_\sigma=F_{\sigma'}$. Hence the values output by $\calM_\sigma$ and by $\calM_{\sigma'}$ are the same on this time window. However, by hypothesis, $\calM_\sigma$ and $\calM_{\sigma'}$ have delay at most $D$ and it remains $3S$ solutions to be output. Hence it means that $\calM_\sigma$ (respectively $\calM_{\sigma'}$) outputs the same values before finishing their computation, which is impossible since $\sigma \neq \sigma'$. 

It remains to show that there exists $\sigma \neq \sigma'$ that are indistinguishable. We prove it by a counting argument, showing that the number of distinct footprints is smaller than $|\Sigma|=T^{3S}$. Let $(L,K,F)$ be a footprint. First, observe that the number of distinct $L$ is at most $4^{2S}$. Indeed, by definition of the space used by $\calM$, the list $L$ has $S$ entries and the total number of bits needed to encode the values of all registers is at most $S$. Hence, one could see $L$ as a word of length at most $2S$ over alphabet $\{0,1,\star,\#\}$, where $\#$ is used to separate the entries of $L$. There exists at most $4^{2S}=16^S$ such words, hence at most $16^S$ distinct $L$.

Now, fix $L$. We claim that there are at most $T^S$ possible values for $K$. It directly comes from the fact that the entry of $K$ that are different from $\bot$ are determined by $L$. There are at most $S$ such entries, all of which containing a value less that $T$. Hence, there are at most $T^S$ possible values for $K$.

Finally, fix $L$ and $K$. We claim that there now exists at most $T^S$ possible values for $F$. Indeed, let $i$ be such that $F[i] \neq []$. By definition, it means that $R_i$ contains a pointer toward a simulation of $\ladv$ for some $\sigma$ for which $K[i] \neq \bot$ steps have been executed. First assume $K[i] \geq T_0$. All solutions that $\ladv$ will output from this point are spaced by $3DS$ steps by definition. Hence, moving $\ladv$ by $3DS$ steps will generate at most one output of the form $\sigma(j)\odot[T_0+j]_2$ for some $j$. The value of $j$ is completely determined by $K[i]$, hence $\sigma(j)$ is the only parameter here and can take at most $T$ distinct values by definition. Since there are at most $S$ such entries in $F$, there are at most $T^S$ distinct $F$.

In other words, there are at most $4^{2S} \cdot T^S \cdot T^S = (4T)^{2S}$ distinct footprints. Since there exists $T^{3S} > (4T)^{2S}$ distinct lists in $\Sigma$, there necessarily exist distinct indistinguishable lists in $\Sigma$, which is, from what precedes, a contradiction with the assumption that $N < 9d(p,b)s(p,b)^2$.
\end{proof}

\begin{proof}[Proof of \Cref{thm:lower-bound-ordered}]
  By \Cref{lem:lower-bound-order}, for every $N \in \N$, $9d(2,2\log N+1) \times s(2,2\log N+1)^2 > N$. If both $d$ and $s$ are polynomial in $b$ then it means that there exists a polynomial $P(X) > 9d(2,2X+1) \times s(2,2X+1)^2$ such that $P(\log N) > N$ for every $N$ which is a contradiction. 
\end{proof}

On particular consequence of \Cref{thm:lower-bound-ordered} is that one cannot use regularization schemes to prove that classes $\DelayP^\poly$ and $\AmDelayP^\poly$ are the same for all possible orders. Indeed, the problems considered in these classes being in $\EnumP$, the size of solution is polynomial in the size of the input. Hence \Cref{thm:lower-bound-ordered} states that either the delay or the space used by a regularization scheme on such problems will have a behavior exponential in the input size. 


\section{Oracles to RAM}
\label{sec:oracles}

In this section, we explain how to implement the blackbox operations in our regularization schemes in constant time when the enumeration process $\abbox$ is given by a RAM machine $M$ working on some input $x$. We denote by $n$ the size of $x$. 
The complexity of any operation in the RAM model, say $a + b$ is $(\log(a) + \log(b)) / \log(n)$. If $a$ and $b$ are bounded by some polynomial in $n$, then $(\log(a) + \log(b)) / \log(n) < C$ for some constant $C$. All integers used in this section are bounded by a polynomial in $n$ and can thus be manipulated in constant time and stored using constant space.

We assume an infinite supply of \emph{zero-initialized memory}, that is all registers of the machines we use are first initialized to zero. It is not a restrictive assumption, since we can relax it, by using a lazy initialization method (see~\cite{mehlhorn2013data} 2, Section III.8.1) for all registers, for only a constant time and space overhead for all memory accesses. 

\subsection{Pointers and Memory}

To implement extensible data structures, we need to use pointers. 
A pointer is an integer, stored in some register, which denotes the index of the register from which is stored an element. In this article, the value of a pointer is always bounded by a polynomial in $n$, thus it requires constant memory to be stored.
Using pointers, it is easy to implement linked lists, each element contains a pointer to its value and a pointer to the next element of the list. Following a pointer in a list can be done in constant time. Adding an element at the end of a list can be done in constant time if we maintain a pointer to the last element.
We also use arrays, which are a set of consecutive registers of known size.

In our algorithms, we may need memory to extend a data structure or to create a new one, but we never need to free the memory. 
Such a memory allocator is trivial to implement: we maintain a register containing the value $F$, such that no register of index larger than $F$ is used. When we need $k$ consecutive free registers to extend a data structure, we use the registers from $F$ to $F+k-1$ and we update $F$ to $F+k$.

\subsection{Counters}
\label{sec:counters}

All algorithms presented in this paper rely, sometimes implicitly, on our ability to efficiently maintain counters, in particular to keep track of the number of steps of a RAM that have been simulated so far. Implementing them naively by simply incrementing a register would result in efficiency loss since these registers may end up containing values as large as $2^{\poly(n)}$ and we could not assume that this register can be incremented, compared or multiplied in constant time in the uniform cost model that we use in this paper.

To circumvent this difficulty, we introduce in this section a data structure that allows us to work in constant time with counters representing large values. Of course, we will not be able to perform any arithmetic operations on these counters. However, we show that our counter data structure enjoys the following operations in constant time: $\inc(c)$ increases the counter by $1$ and $\mbit(c)$ returns the index of the most significant bit of the value encoded by $c$. In other words, if $k = \mbit(c)$ then we know that $\inc(c)$ has been executed at least $2^k$ times and at most $2^{k+1}$ times since the initialization of the counter.

The data structure is based on Gray code encoding of numbers. A Gray code is an encoding enjoying two important properties: the Hamming distance of two consecutive elements in the Gray enumeration order is one and one can produce the next element in the order in constant time. The method we present in this section is inspired by Algorithm G presented in~\cite{knuth2011combinatorial} which itself is inspired by~\cite{bitner1976efficient} for the complexity. The only difference with Algorithm G is that we maintain a stack containing the positions of the $1$-bits of the code in increasing order so that we can retrieve the next bit to switch in constant time which is not obvious in Algorithm G. Our approach is closer to the one presented in Algorithm L of~\cite{knuth2011combinatorial} but for technical reasons, we could not use it straightforwardly.

We assume in the following that we have a data structure for a stack supporting initialization, $\push$ and $\pop$ operations in constant time and using $O(s)$ registers in memory where $s$ is the size of the stack (it can be implemented by a linked list). 

\subparagraph{Counters with a known upper bound on the maximal value.} We start by presenting the data structure when an upper bound on the number of bits needed to encode the maximal value to be stored in the counter is known. For now on, we assume that the counter will be incremented at most $2^{k}-1$ times, that is, we can encode the maximal value of the counter using $k$ bits.

To initialize the data structure, we simply allocate $k$ consecutive registers $R_0, \dots, R_{k-1}$ initialized to $0$, which can be done in constant time since the memory is assumed to be initialized to $0$, and we initialize an empty stack $S$. Moreover, we have two other registers $A$ and $M$ initialized to $0$.

We will implement $\mbit$ and $\inc$ to ensure the following invariants: the bits of the Gray Code encoding the value of the counter are stored in $R_0, \dots, R_{k-1}$. $A$ contains the parity of the number of $1$ in $R_0, \dots, R_{k-1}$. $M$ contains an integer smaller than $k$ that is the position of the most significant bit in the Gray Code (the biggest $j \leq k-1$ such that $R_j$ contains $1$). Finally, $S$ contains all positions $j$ such that $R_j$ is set to $1$ in decreasing order (that is if $j < j'$ are both in $S$, $j$ will be poped before $j'$).

To implement $\mbit$, we simply return the value of $M$. It is well-known and can be easily shown that the most significant bit of the Gray Code is the same as the most significant bit of the value it represents in binary so if the invariant is maintained, $M$ indeed contains a value $j$ such that the number of time $\inc(c)$ has been executed is between $2^j$ and $2^{j+1}-1$.

To implement $\inc$, we simply follow Algorithm G from~\cite{knuth2011combinatorial}. If $A$ is $0$ then we swap the value of $R_0$. Otherwise, we swap the value of $R_{j+1}$ where $j$ is the smallest position such that $R_j=1$ (if $j$ is $k-1$ then we have reached the maximal value of the code which we have assumed to be impossible, see below to handle unbounded counters). One can find $j$ in constant time by just popping the first value in $S$, which works if the invariant is maintained. Now, one has to update the auxiliary memory: $A$ is replaced by $1-A$ so that it still represents the parity of the number of $1$ in the Gray Code. To update $S$, we proceed as follows: if $A$ is $0$ then either $R_0$ has gone from $0$ to $1$, in which case we have to push $0$ in $S$ or $R_0$ has gone from $1$ to $0$, in which case we have to pop one value in $S$, which will be $0$ since $S$ respects the invariant. It can be readily proven that this transformation preserves the invariant on $S$. Now, if $A$ is $1$, then either the value of $R_{j+1}$ has gone from $0$ to $1$ which means that we have to push $j+1$ and $j$ on the stack ($j$ is still the first bit containing $1$ so it has to be pushed back on the top of the stack and $j+1$ is the next bit set to $1$ so it has to be just after $j$ in $S$). Or the value of $R_{j+1}$ has gone from $1$ to $0$. In this case, it means that after having popped $j$ from $S$, $j+1$ sits at the top of $S$. Since $R_{j+1}$ is not $0$, we have to pop $j+1$ from $S$ and push back $j$. Again, it is easy to see that these transformations preserve the invariant on $S$. Moreover, we never do more than $2$ operations on the stack so this can be done in constant time. Finally, if $R_{j+1}$ becomes $1$ and $j+1>M$, we set $M$ to $j+1$. 

Observe that we are using $2k+2$ registers for this data structure since the stack will never hold more than $k$ values. 

\subparagraph{Unbounded counters.} To handle unbounded counters, we start by initializing a bounded counter $c_0$ with $k$ bits ($k$ can be chosen arbitrarily, $k=1$ works). When $c_0$ reaches its maximal value, we just initialize a new counter $c_1$ with $k+1$ bits and modify it so it contains the Gray Code of $c_0$ (with one extra bit) and copy its stack $S$ and the values of $A$ and $M$.

This can be done in constant time thanks to the following property of Gray code: the Gray code encoding of $2^k-1$ contains exactly one bit set to $1$ at position ${k-1}$. Thus, to copy the value of $c_0$, we only have to swap one bit in $c_1$ (which has been initialized to $0$ in constant time). Moreover, the stack of $c_0$ containing only positions of bit set to $1$, it contains at this point only the value $k-1$ that we can push into the stack of $c_1$. Copying registers $A$ and $M$ is obviously in constant time. 

An example for a Gray code using three bits is described below:
\begin{itemize}
 
\item $c_0$:
  \begin{itemize}
  \item Bits: 000, Stack: []
  \item Bits: 001, Stack: [0]
  \item Bits: 011, Stack: [0,1]
  \item Bits: 010, Stack: [1]
  \item Bits: 110, Stack: [1,2]
  \item Bits: 111, Stack: [0,1,2]
  \item Bits: 101, Stack: [0,2]
  \item Bits: 100, Stack: [2]
  \end{itemize}

\item $c_1$ :
  \begin{itemize}  
  \item Bits: 0000, Stack: [] (initialization)
  \item Bits: 0100, Stack: [2] (copy the value of $c_0$)
  \item $\dots$
  \end{itemize}
\end{itemize}

To summarize, we have proven the following:

\begin{theorem}
  \label{th:counter} There is a data structure $\Counter$ that can be initialized in constant time and for which operations $\inc$ and $\mbit$ can be implemented in constant time with the following semantic: $\mbit(c)$ returns an integer $j$ such $v$ is between $2^{j}$ and $2^{j+1}-1$ where $v$ is the number of time $\inc(c)$ has been executed since the initialization of $c$. Moreover, the data structure uses $O(\log(v)^2)$ register.
\end{theorem}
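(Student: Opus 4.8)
The plan is to turn the construction sketched above into a formal argument in three stages: first establish the correctness and constant-time behaviour of the \emph{bounded} counter; then bootstrap it to an unbounded counter by the doubling trick; finally bound the space. The crux is a single invariant for the bounded counter, maintained by induction on the number $v$ of calls to $\inc$: the registers $R_0,\dots,R_{k-1}$ hold the standard reflected Gray code of $v$; $A$ holds the parity of the Hamming weight of that codeword; $M$ holds the index of its highest $1$-bit (or $0$ if $v=0$); and the stack $S$ holds exactly the indices of the $1$-bits of the codeword, with the smallest on top. The base case $v=0$ is immediate from the zero-initialised memory.

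For the inductive step I would invoke the classical fact underlying Algorithm~G of~\cite{knuth2011combinatorial}: the successor of a Gray codeword of even weight is obtained by complementing bit $0$, and of odd weight by complementing the bit immediately to the left of its lowest $1$-bit, whose position is read off in $O(1)$ as the top of $S$. In each of these two cases, and each of the two sub-cases according to whether the complemented bit goes $0\to1$ or $1\to0$, I would verify that the prescribed updates to $A$, $M$ and $S$ restore the invariant using at most two $\push$/$\pop$ operations and a constant number of register reads and writes; since every manipulated value is $<k$ and $k=O(\log v)\le\poly(n)$, each operation costs $O(1)$ in the uniform cost model, so $\inc$ runs in constant time. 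For $\mbit$ I would simply return $M$, justifying the stated semantics by the elementary observation that the position of the most significant bit of the Gray code of $v$ equals $\lfloor\log_2 v\rfloor$, hence $2^M\le v\le 2^{M+1}-1$ for $v\ge 1$ (the case $v=0$ being vacuous).

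Next I would make the overflow handling precise: a $k$-bit counter is about to leave its range exactly when its value is $2^k-1$, whose Gray code is the single bit at position $k-1$ with $S=[k-1]$. At that instant I allocate $k+1$ fresh zero-initialised registers for a new counter $c'$, flip its bit $k-1$, push $k-1$ on its empty stack, and copy $A$ and $M$; this reproduces the state of the old counter inside the larger one in $O(1)$ time and leaves $v$ unchanged, so both the invariant and the semantics of $\mbit$ carry over, and at most one doubling is triggered per $\inc$, keeping $\inc$ worst-case $O(1)$. For the space bound I would observe that after $v$ increments the active counter has $k=O(\log v)$ bits using $2k+2$ registers, and since old counters are never freed the total register count is $\sum_i\big(2(k_0+i)+2\big)$ over the $O(\log v)$ doublings, i.e. $O(\log(v)^2)$.

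I expect the one genuinely delicate point to be the bookkeeping of $S$ inside $\inc$ in the odd-weight case: there one must argue that complementing $R_{j+1}$ either turns $j+1$ into a new $1$-bit sitting just above $j$ in $S$ (a $\pop$ of $j$ followed by two $\push$es) or clears an existing $1$-bit at $j+1$ (a $\pop$ of $j$, then a $\pop$ of $j+1$, then a $\push$ of $j$), and that in both cases the ``smallest on top'' ordering and the correspondence with the actual bit pattern are preserved. Everything else -- the even-weight case, the $\mbit$ correctness, and the arithmetic for the space bound -- is routine once this case analysis is nailed down.
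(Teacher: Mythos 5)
Your proposal is correct and follows essentially the same route as the paper: a Gray-code counter backed by Knuth's Algorithm~G, with auxiliary registers for the parity of the Hamming weight, the position of the most-significant bit, and a stack of $1$-bit positions with the smallest on top. The overflow handling (exploiting that the Gray code of $2^k-1$ is a single $1$-bit at position $k-1$, so the migration to a $(k+1)$-bit counter is a constant-size state transfer) and the arithmetic-series space bound $\sum_i\bigl(2(k_0+i)+2\bigr)=O(\log(v)^2)$ are exactly the paper's argument.
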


One could make the data structure more efficient in memory by lazily freeing the memory used by the previous counters so that it is $O(\log(v))$. However, such an optimization is not necessary for our purpose.

\subsection{Instructions $\ramload$, $\ramstep$ and $\steps$}
\label{sec:countersknown}

In this section, we explain formally how one can use as an oracle an enumeration process $\abbox$ given by a RAM $I$ and an input $x$. We assume that we know an upper bound $s(n)$ on the memory used by $I$ on inputs of size $n$ and that $s(n)$ is polynomial in $n$. We also assume that  the amortized delay $p(n)$ of $I$ on inputs of size $n$ is polynomially bounded though we do not assume it to be known in advance. We also assume that the solutions output by $I$ are of size polynomial in $n$. In other words, if $N(n)$ is the maximal number of solutions output by $I$ on inputs of size $n$, then $\log(N(n))$ is bounded by a polynomial in $n$. This assumption always hold from machine solving $\EnumP$ problems, the setting we consider in this paper. Recall that the total time of $I$ on input $x$ is bounded by $tt(n) := N(n) \cdot p(n)$.

We explain how one can implement the blackbox instructions $\bbload$, $\bbstep$ and $\bbsteps$ with complexity $O(1)$. To simulate $\abbox$, defined as the enumeration process of $I$ on input $x$, we will maintain in memory the \textbf{configuration} of $\abbox$ after $k$ steps. The configuration of $I$ on input $x$ at step $k$ is defined as the content of the registers $R_1, \dots, R_{s(n)}$ after $k$ steps, the index of the next instruction to be executed by $I$ and a counter $c$ representing $k$, encoded as in \Cref{sec:counters}. To store a configuration, we hence need $O(s(n)+\log(tt(n))^2)$ registers, the instruction index being always bounded by a constant, it can be stored in one register. 

Instruction $\bbload$ hence reserves $s(n)+2$ consecutive registers in memory: enough to store the memory of $\abbox$ and two extra registers, one holding the instruction index (whose value will be bounded by a constant depending on $I$) and one holding a pointer to a dynamic Gray code base counter as in \Cref{sec:counters} representing the number of steps executed so far. The instruction returns $M$, the index of the first register storing the configuration of $I$ and initializes the step counter to $0$ and the instruction index to $0$. Instruction $\bbstep$ executes the command given by the instruction index by offseting register numbers by $M$. It then changes the instruction index following the program flow and increments the step counter $c$.  Instruction $\bbisoutput$ returns true only if the instruction index points on an output instruction $\Output(i,j)$ of the RAM machine. If so, instruction $\bboutput$ returns $(i+M, j+M)$, indicating that the current output of $\abbox$ is stored in registers $R_{i+M}, \dots, R_{j+M}$. Finally, instruction $\bbsteps$ returns the value of the maintained step counter. 

All of this can be done in constant time, the only non trivial part being incrementing the counter which can grows too much. However, using Gray Code based counters from \Cref{sec:counters}, one can circumvent this difficulty. The total number of registers used by this representation will never exceed $O(s(n)+\log(tt(n))^2)$ and hence is polynomial in $n$ following the assumption made earlier in this section.

\subsection{Implementing regularization schemes}
\label{sec:implementing}

While we just show that every black box operation can be simulated in $O(1)$ when the black box corresponds to some RAM machine $I$ on input $x$ of size $n$, it is however not sufficient to have an efficient implementation of the different schemes we have presented so far. The reason for this is that in every algorithm, we need to either compare $\bbsteps$ with some value or to do some arithmetic computation on it, e.g., Line~\ref{line:test} in \Cref{alg:enumit} tests whether $\bbsteps[M[j]]$ is inside some set $Z_j$. Since $\bbsteps$ may be exponential in $|x|$ at some point, we cannot expect this test to be executed in constant time. We explain in this section how one can circumvent this kind of difficulties in the implementation.

In this section, we suppose that $\bbam = p(n)$  is polynomial in $n$ and that the space $s(n)$ used by $\abbox$ is also polynomial in $n$. 

\paragraph*{Zone detection for \Cref{alg:enumit}, \Cref{alg:enumit_improved} and \Cref{alg:geometric-adaptative} }

 In these three algorithms, we must test whether $M[j]$ satisfies $\steps(M[j]) \in $ $[{2^{j-1} p(n)}+1, {2^j p(n)}]$, e.g., in Line~\ref{line:test} of Algorithm~\ref{alg:enumit} ($p(n)$ is replaced by $F$, a precomputed lower bound in \Cref{alg:geometric-adaptative}). To check this inequality in constant time, we simply initialize a counter $c_j$ as in Section~\ref{sec:counters}. Instead of incrementing it each time $\move(M[j])$ is called, we increment it every $p(n)$ calls to $\move$. This can easily be done by keeping another register $R$ which is incremented each time $\move(M[j])$ is called and whenever it reaches value $p(n)$, it is reset to $0$ and $c_j$ is incremented. 
 To decide whether $M[j]$ enters its zone, it is sufficient to test whether $\mbit(c_j) = j-1$. The first time it happens, then exactly $2^{j-1}p(n)$ steps of $M[j]$ have been executed, so it will enters its zone in the next move, so we can remember it to start the enumeration. When $\mbit(c_j)$ becomes $j$, it means that $2^jp(n)$ steps of $M[j]$ have been executed, that is, $M[j]$ leaves its zone. Thus, we can perform the comparison in constant time.

 \paragraph*{Approximate budget in \Cref{alg:enumqueue_improved}, \Cref{alg:geometric-adaptative} and \Cref{alg:enumitinc}} 

In \Cref{alg:enumqueue_improved}, \Cref{alg:geometric-adaptative}, the difficulty is to initialize the variable $B$ to $d\log(d)^2$. The variable $d$ is a local bound on the delay and is updated after every output by $\max (d, \frac{\steps(M)}{1+S_1})$.
However, even if $\steps(M) \over 1+S_1$ is bounded by $p(n)$, one cannot compute it in constant time since both $\steps(M)$ and $S_1$ can reach $O(2^{\poly(n)})$ during the execution of the algorithm. 

To circumvent this complexity, we compute an overapproximation of $B$, which is close enough to $B$ and can then be decremented in constant time since its value is bounded by $Cp(n)$ for some constant $C$. We encode $\steps(M)$ and $S_1$ with two counters as presented in Section~\ref{sec:counters}. We compute in constant time
their most significant bits $k_M = \mbit(\steps(M))$ and $k_S = \mbit(S_1)$. 

We have that $$2^{k_M} \leq \steps(M) < 2^{k_M +1}$$ and $$2^{k_S} \leq S_1 < 2^{k_S +1}.$$

 Hence, we obtain the following inequalities: $$2^{k_M-k_S -1} \leq  \frac{\steps(M)}{1+S_1} \leq 2^{k_M-k_S +1}.$$ 

 We maintain the value $R$ instead of $d$, defined as $R = \max(R,2^{k_M-k_S +1})$ and we remark that $R \geq d$ and $R \leq 4d$ because of the previous inequalities.
Moreover, $R$ can be computed in constant time from $k_M$ and $k_S$, since its value is bounded by $4d \leq 4p(n)$. The value of $\log(R)$ is $k_M-k_S +1$ and may also be obtained in constant time. Instead of setting $B$ to $d\log(d)^2$ in the algorithm, we set it to $R\log(R)^2$. Since $R \geq d$, the proof of correction of both algorithms is still correct. Morevoer, since  $R\log(R)^2$ is larger than $d\log(d)^2$ by less than a factor of $16 = 4\cdot \log(4)^2$, the delay of the two algorithm is at most multiplied by this constant factor.  

In \Cref{alg:enumitinc}, we have a similar problem: while $d$ is given at the beginning of the algorithm, we must compare the variable $B$ to $2d\cdot S^a(a+1)$, where $S$ can reach $O(2^{\poly(n)})$ during the execution of the algorithm. We use the same method of approximation of $S$ by its most significant bit. Hence, we achieve a constant time approximate comparison, which does not change the correction of the algorithm and increases its delay by a constant factor only.

\paragraph*{Lazy implementation of $\ramcopy$}

\label{sec:copyorcales}

Algorithm~\ref{alg:enumit_improved} does not use the $\bbsol$ upper bound, but instead relies on the $\bbcopy$ instruction. This instruction takes as parameter a data structure $M$ storing the configuration of a RAM and returns a new data structure $M'$ of the same machine starting in the same configuration (an exact copy of the memory). A straightforward way of implementing $\bbcopy$ would be to copy every register used by the data structure $M$ in a fresh part of the memory. However, this approach may be too expensive, since we need to copy the whole memory used by $M$. Since we are guaranteed to have one output solution between each $\bbcopy$ instruction, the delay of Algorithm~\ref{alg:enumit} then becomes $O(\log(\bbsol)(\bbam + s(n)))$.

We explain how one can lazily implement this functionality so that the memory of $M$ is copied only when needed. To avoid having a preprocessing in $O(s(n))$ we track the indices of the used registers in the beginning of the enumeration to do only a sparse copy. This method ensures that $\bbcopy$ runs in $O(1)$ with only a constant overhead in the other instructions and space. 

As a consequence of all the implementations of the blackbox instructions described in this section, we give a version of \Cref{th:geometric_improved}, where the enumeration process is induced by a RAM. Similar Theorems can be proved for the complexity and space of all enumeration schemes of this article applied to RAMs.

\begin{theorem}\label{th:space_not_cardinal}
Let $I$ be a RAM. Let $T(n)$ be the total time of $I$, $p(n)$ and $s(n)$ are two given polynomials which respectively upperbound the amortized delay and the space of $I$. We can construct a RAM $I'$ which enumerates $I(x)$ on input $x \in \Sigma^n$ with:
\begin{itemize}
\item delay $O(p(n)\log(\#I(x)))$,
\item space $O(s(n)\log(\#I(x)))$,
\item total time $O(T(n))$. 
\end{itemize}
\end{theorem}

\begin{proof}
We use the regularization scheme of \Cref{alg:enumit_improved}, whose complexity is stated in \Cref{th:geometric_improved}. To be able to state the time and space used when using an enumeration process derived from a RAM machine, we must analyze the time complexity and space complexity of the implementations of each instruction.

We have already explained in the section how to implement all blackbox instructions but $\bbcopy$ in time $O(1)$. We first need to evaluate the space used by the $\lceil \log(\#I(x)) \rceil$  simulations of $I$. RAM $I$ uses a space $s(n)$ and we have explained how a simulation of $I$ uses a space $O(s(n))$. Moreover, for each simulation we maintain a Gray code counter as explained in \Cref{sec:counters}. In \Cref{alg:enumit_improved}, we know the maximal value of a counter when it is created by a $\bbcopy$ instruction, hence we need to allocate the memory only once. Since the maximal value of a counter is always bounded by $\lceil \#I(x) \rceil$, the memory used is $O(\log(\#I(x)))$. The space used has to be larger than the logarithm of the total time of $I$ on $x$, otherwise $I$ would be in the same configuration at two points in time and loop.
Since $\#I(x)$ is less than the total time and $s(n)$ larger than the space used, $s(n) \geq \log(\#I(x))$. Hence, each simulation uses a space $O(s(n))$ and the space used by the algorithm to store counters is $O(s(n)\log(\#I(x)))$.

Let us now explain how the data structure $M$ is lazily copied when the instruction $\bbcopy[M]$ is executed. The copied data structure contains a register for the index of the current instruction, a counter of the number of steps and an array to represent the registers of $I$ the simulated machine.

We build the new data structure $M'$ given by $\bbcopy[M]$ as follows. 
The counter in $M'$ is stored as in Theorem~\ref{th:counter}. It is initialized so that it represents the value $2^{j-1}p(n)$ and it counts up to $2^{j+1}p(n)$. This value is represented by a classical counter of value $0$ stored in a single register and the Gray code counter contains the  $2^{j-1}$th integer in Gray code order for integers of size $j+1$.
This number is equal to $2^{j-1} + 2^{j-2}$, which has only two one bits (the second and the third), hence it can be set up in constant time. The auxiliary structure is
the list of ones of the integer, which is here of size two and can thus be set up in constant time.

We explain how to lazily create an exact copy of the array representing the state of the registers at the time of the copy. Let us denote this array by $A$, it is of size $s(n)$. We create in constant time $A'$ and $U$ of size $s(n)$ both initialized to zero. The value of $U[r]$ is $0$ if $A'[r]$ has not yet been copied from $A[r]$ and $1$ otherwise. Each time $\ramstep(M)$ is executed and modifies the value $A[r]$, if $U[r] = 0$, it first set $A'[r] = A[r]$ and $U[r] = 1$. Each time $\ramstep(M')$ is executed and reads the value $A'[r]$, if $U[r] = 0$, it first set $A'[r] = A[r]$ and $U[r] = 1$.
This guarantees that the value of $A'$ is always the same as if we had completely copied it from $A$ when the instruction $\bbcopy$ is executed. 
The additional checks and updates of $U$ add a constant time overhead to $\ramstep$. 
Moreover, we maintain a simple counter $c$, and each time a $\ramstep(M')$ operation is executed, if $U[c] = 0$, we set $A'[c] = A[c]$ and $U[c] = 1$. When $c=s(n)$, the copy is finished and we can use $A$ and $A'$ as before, without checking $U$.

The described implementation of the $\bbcopy$ operation is in constant time. The $\ramstep$ instruction, modified as described, has a constant overhead for each lazy copy mechanism in action. To evaluate the complexity of \Cref{alg:enumit_improved}, we must evaluate the number of active copies. We prove that \Cref{alg:enumit_improved} has only a single active copy mechanism at any point in time, with an improvement to the copy mecanism.

 We maintain in the algorithm an additional array $S$ of size $s(n)$ which contains the indices of registers written by $I$. The indices are written in the order they appear in the computation and we may have the same index stored twice. We also maintain $r$ the number of stored indices. When $r$ is larger than $s(n)$, we do not update this array anymore.  Note that it is possible to maintain this array in constant time when doing a $\ramstep(M)$ operation, since the values inside the array are bounded by $s(n)$.

 When the $\bbcopy$ operation is executed, we store the current value of $r$ in $r_{last}$. When $r_{last} \geq s(n)$, we use the implementation of $\ramstep(M)$ and $\bbcopy$ previously defined.

 When $r_{last} \leq s(n)$, instead of doing $A'[c] = A[c]$, we do $A'[S[c]] = A[S[c]]$, that is we copy only the registers which have been written by $I$ up to this point. Hence, we need to do this copy until $c = r_{last}$ instead of $c = s(n)$.

Now, consider the $\bbcopy$ operation which creates the data structure $M[j]$ at the beginning of the zone $Z_{j-1}$ in \Cref{alg:enumit_improved}. The number of simulation steps of $I$ before $M[j-1]$ arrives at $Z_{j-1}$ is $2^{j-1}d$. As a consequence, when $M[j]$ is created we have $r_{last} \leq 2^{j-1}d$. Since the size of $Z_{j-1}$ is $2^{j-1}d$, when the $M[j]$ goes through the zone $Z_{j-1}$, we have that $c = r_{last}$ at some point and the copy is finished. 
Therefore, there is a single lazy copy at each point of \Cref{alg:enumit_improved}.

We have proved that the $\bbcopy$ instruction can be executed in time $O(1)$ and space $O(s(n))$, with a constant overhead to the other instructions which proves the time and space complexity bounds of this theorem.
\end{proof}


\section{Conclusion}
\label{sec:conclusion}
  
In this paper, we exhaustively studied the problem of transforming algorithms with small amortized delay into algorithms having a small worst case delay. While a naive buffer based regularization method has been folklore in the literature, we propose several ways of improving it, by reducing the memory footprint or by automatically adjusting to unknown amortized delay. While, we have ignored the fact that the blackbox may have a preprocessing greater than its amortized delay, we observe here that all our regularization schemes can easily be modified to take this into account because every $\bbload$ instruction happens at the beginning of the algorithm. Hence, one could execute several $\bbstep$ instructions until the first solution is found just after $\bbload$, before starting the enumeration algorithm. This would result in a regularization scheme with preprocessing. 

In this paper, we complemented our algorithms with lower bounds in a model of computation where the underlying enumeration algorithm is accessed in a blackbox fashion. We have shown that no regularization scheme can achieve a worst case delay that is linear in the real amortized delay.  Moreover, we show that if one wants to preserve the order during the enumeration process, then either the space used by the algorithm or the worst case delay has to be exponential in the size of the solutions. In most enumeration problems, it means that either the space or the delay have to be exponential in the size of the input.

That said, all our techniques and lower bounds are dealing with enumeration algorithms in a blackbox manner. An interesting research avenue could be to explore the limits of regularization methods that are allowed to access enumeration algorithms with their code. For example, with our method, we cannot rule out that every problem in $\AmDelayP^\poly$ can be turned into a $\DelayP^\poly$ problem while preserving the enumeration order. Our result only shows that if it were possible, then the proof should somehow rely on internal knowledge of the enumeration algorithm. We leave this question open for future work.


\bibliography{biblio}

\begin{thebibliography}{10}

\bibitem{aho1974design}
Alfred~V Aho and John~E Hopcroft.
\newblock {\em The design and analysis of computer algorithms}.
\newblock Pearson Education India, 1974.

\bibitem{andrade2016enumeration}
Ricardo Andrade, Martin Wannagat, Cecilia~C Klein, Vicente Acu{\~n}a, Alberto
  Marchetti-Spaccamela, Paulo~V Milreu, Leen Stougie, and Marie-France Sagot.
\newblock Enumeration of minimal stoichiometric precursor sets in metabolic
  networks.
\newblock {\em Algorithms for Molecular Biology}, 11(1):25, 2016.

\bibitem{avis1996reverse}
David Avis and Komei Fukuda.
\newblock Reverse search for enumeration.
\newblock {\em Discrete Applied Mathematics}, 65(1-3):21--46, 1996.

\bibitem{Bagan09}
Guillaume Bagan.
\newblock {\em Algorithms and complexity of enumeration problems for the
  evaluation of logical queries}.
\newblock PhD thesis, Universit{\'e} de Caen, France, 2009.

\bibitem{balcazar1990self}
Jos{\'e}L Balc{\'a}zar.
\newblock Self-reducibility.
\newblock {\em Journal of Computer and System Sciences}, 41(3):367--388, 1990.

\bibitem{barth2015efficient}
Dominique Barth, Olivier David, Franck Quessette, Vincent Reinhard, Yann
  Strozecki, and Sandrine Vial.
\newblock Efficient generation of stable planar cages for chemistry.
\newblock In {\em International Symposium on Experimental Algorithms}, pages
  235--246. Springer, 2015.

\bibitem{bitner1976efficient}
James~R Bitner, Gideon Ehrlich, and Edward~M Reingold.
\newblock Efficient generation of the binary reflected gray code and its
  applications.
\newblock {\em Communications of the ACM}, 19(9):517--521, 1976.

\bibitem{blasius2019efficiently}
Thomas Bl{\"a}sius, Tobias Friedrich, Julius Lischeid, Kitty Meeks, and Martin
  Schirneck.
\newblock Efficiently enumerating hitting sets of hypergraphs arising in data
  profiling.
\newblock In {\em 2019 Proceedings of the Twenty-First Workshop on Algorithm
  Engineering and Experiments (ALENEX)}, pages 130--143. SIAM, 2019.

\bibitem{bohmova2018computing}
Kate{\v{r}}ina B{\"o}hmov{\'a}, Luca H{\"a}fliger, Mat{\'u}{\v{s}} Mihal{\'a}k,
  Tobias Pr{\"o}ger, Gustavo Sacomoto, and Marie-France Sagot.
\newblock Computing and listing st-paths in public transportation networks.
\newblock {\em Theory of Computing Systems}, 62(3):600--621, 2018.

\bibitem{boros_generating_2009}
Endre Boros, Khaled Elbassioni, Vladimir Gurvich, and Kazuhisa Makino.
\newblock Generating vertices of polyhedra and related problems of monotone
  generation.
\newblock {\em Proceedings of the Centre de Recherches Mathématiques at the
  Université de Montréal, special issue on Polyhedral Computation (CRM
  Proceedings and Lecture Notes)}, 49:15--43, 2009.

\bibitem{DBLP:conf/icalp/BrosseLM22}
Caroline Brosse, Vincent Limouzy, and Arnaud Mary.
\newblock Polynomial delay algorithm for minimal chordal completions.
\newblock In Mikolaj Bojanczyk, Emanuela Merelli, and David~P. Woodruff,
  editors, {\em 49th International Colloquium on Automata, Languages, and
  Programming, {ICALP} 2022, July 4-8, 2022, Paris, France}, volume 229 of {\em
  LIPIcs}, pages 33:1--33:16. Schloss Dagstuhl - Leibniz-Zentrum f{\"{u}}r
  Informatik, 2022.
\newblock URL: \url{https://doi.org/10.4230/LIPIcs.ICALP.2022.33}, \href
  {http://dx.doi.org/10.4230/LIPIcs.ICALP.2022.33}
  {\path{doi:10.4230/LIPIcs.ICALP.2022.33}}.

\bibitem{capelli2019incremental}
Florent Capelli and Yann Strozecki.
\newblock Incremental delay enumeration: Space and time.
\newblock {\em Discrete Applied Mathematics}, 268:179--190, 2019.

\bibitem{capelli2021enumerating}
Florent Capelli and Yann Strozecki.
\newblock Enumerating models of {DNF} faster: Breaking the dependency on the
  formula size.
\newblock {\em Discrete Applied Mathematics}, 303:203--215, 2021.

\bibitem{CapelliS23}
Florent Capelli and Yann Strozecki.
\newblock Geometric amortization of enumeration algorithms.
\newblock In {\em 40th International Symposium on Theoretical Aspects of
  Computer Science, {STACS} 2023, March 7-9, 2023, Hamburg, Germany}, volume
  254 of {\em LIPIcs}, pages 18:1--18:22, 2023.

\bibitem{carmeli2019enumeration}
Nofar Carmeli and Markus Kr{\"o}ll.
\newblock On the enumeration complexity of unions of conjunctive queries.
\newblock In {\em Proceedings of the 38th ACM SIGMOD-SIGACT-SIGAI Symposium on
  Principles of Database Systems}, pages 134--148, 2019.

\bibitem{cohen2008generating}
Sara Cohen, Benny Kimelfeld, and Yehoshua Sagiv.
\newblock Generating all maximal induced subgraphs for hereditary and
  connected-hereditary graph properties.
\newblock {\em Journal of Computer and System Sciences}, 74(7):1147--1159,
  2008.

\bibitem{conte2016sublinear}
Alessio Conte, Roberto Grossi, Andrea Marino, and Luca Versari.
\newblock Sublinear-space bounded-delay enumeration for massive network
  analytics: Maximal cliques.
\newblock In {\em 43rd International Colloquium on Automata, Languages, and
  Programming (ICALP 2016)}. Schloss Dagstuhl-Leibniz-Zentrum fuer Informatik,
  2016.

\bibitem{conte2019listing}
Alessio Conte, Roberto Grossi, Andrea Marino, and Luca Versari.
\newblock Listing maximal subgraphs satisfying strongly accessible properties.
\newblock {\em SIAM Journal on Discrete Mathematics}, 33(2):587--613, 2019.

\bibitem{conte2019new}
Alessio Conte and Takeaki Uno.
\newblock New polynomial delay bounds for maximal subgraph enumeration by
  proximity search.
\newblock In {\em Proceedings of the 51st Annual ACM SIGACT Symposium on Theory
  of Computing}, pages 1179--1190, 2019.

\bibitem{cook1973time}
Stephen~A Cook and Robert~A Reckhow.
\newblock Time bounded random access machines.
\newblock {\em Journal of Computer and System Sciences}, 7(4):354--375, 1973.

\bibitem{DBLP:journals/tocl/DurandG07}
Arnaud Durand and Etienne Grandjean.
\newblock First-order queries on structures of bounded degree are computable
  with constant delay.
\newblock {\em {ACM} Trans. Comput. Log.}, 8(4):21, 2007.

\bibitem{eiter2003new}
Thomas Eiter, Georg Gottlob, and Kazuhisa Makino.
\newblock New results on monotone dualization and generating hypergraph
  transversals.
\newblock {\em SIAM Journal on Computing}, 32(2):514--537, 2003.

\bibitem{eppstein2015k}
David Eppstein et~al.
\newblock K-best enumeration.
\newblock {\em Bulletin of EATCS}, 1(115), 2015.

\bibitem{ferreira2014amortized}
Rui Ferreira, Roberto Grossi, Romeo Rizzi, Gustavo Sacomoto, and Marie-France
  Sagot.
\newblock Amortized-delay algorithm for listing chordless cycles in undirected
  graphs.
\newblock In {\em European Symposium on Algorithms}, pages 418--429. Springer,
  2014.

\bibitem{fredman1996complexity}
Michael Fredman and Leonid Khachiyan.
\newblock On the complexity of dualization of monotone disjunctive normal
  forms.
\newblock {\em Journal of Algorithms}, 21(3):618--628, 1996.

\bibitem{Goldberg91}
Leslie~Ann Goldberg.
\newblock {\em Efficient algorithms for listing combinatorial structures}.
\newblock PhD thesis, University of Edinburgh, {UK}, 1991.
\newblock URL: \url{http://hdl.handle.net/1842/10917}.

\bibitem{johnson1988generating}
David~S Johnson, Mihalis Yannakakis, and Christos~H Papadimitriou.
\newblock On generating all maximal independent sets.
\newblock {\em Information Processing Letters}, 27(3):119--123, 1988.

\bibitem{khachiyan2005complexity}
Leonid Khachiyan, Endre Boros, Khaled Elbassioni, Vladimir Gurvich, and
  Kazuhisa Makino.
\newblock On the complexity of some enumeration problems for matroids.
\newblock {\em SIAM Journal on Discrete Mathematics}, 19(4):966--984, 2005.

\bibitem{khuller1991planar}
Samir Khuller and Vijay~V Vazirani.
\newblock Planar graph coloring is not self-reducible, assuming p$\ne$ np.
\newblock {\em Theoretical Computer Science}, 88(1):183--189, 1991.

\bibitem{knuth2011combinatorial}
Donald~E Knuth.
\newblock Combinatorial algorithms, part 1, volume 4a of the art of computer
  programming, 2011.

\bibitem{lawler1980generating}
Eugene~L Lawler, Jan~Karel Lenstra, and AHG Rinnooy~Kan.
\newblock Generating all maximal independent sets: {NP}-hardness and
  polynomial-time algorithms.
\newblock {\em SIAM Journal on Computing}, 9(3):558--565, 1980.

\bibitem{lucas1882recreations}
{\'E}douard Lucas.
\newblock {\em R{\'e}cr{\'e}ations math{\'e}matiques: Les travers{\'e}es. Les
  ponts. Les labyrinthes. Les reines. Le solitaire. La num{\'e}ration. Le
  baguenaudier. Le taquin}, volume~1.
\newblock Gauthier-Villars et fils, 1882.

\bibitem{mary2016efficient}
Arnaud Mary and Yann Strozecki.
\newblock Efficient enumeration of solutions produced by closure operations.
\newblock In {\em 33rd Symposium on Theoretical Aspects of Computer Science},
  2016.

\bibitem{MaryS19}
Arnaud Mary and Yann Strozecki.
\newblock Efficient enumeration of solutions produced by closure operations.
\newblock {\em Discrete Mathematics {\&} Theoretical Computer Science}, 21(3),
  2019.

\bibitem{mehlhorn2013data}
Kurt Mehlhorn.
\newblock {\em Data structures and algorithms 1: Sorting and searching},
  volume~1.
\newblock Springer Science \& Business Media, 2013.

\bibitem{murakami2014efficient}
Keisuke Murakami and Takeaki Uno.
\newblock Efficient algorithms for dualizing large-scale hypergraphs.
\newblock {\em Discrete Applied Mathematics}, 170:83--94, 2014.

\bibitem{read1975bounds}
Ronald~C Read and Robert~E Tarjan.
\newblock Bounds on backtrack algorithms for listing cycles, paths, and
  spanning trees.
\newblock {\em Networks}, 5(3):237--252, 1975.

\bibitem{ruskey2003combinatorial}
Frank Ruskey.
\newblock Combinatorial generation.
\newblock {\em Preliminary working draft. University of Victoria, Victoria, BC,
  Canada}, 11:20, 2003.

\bibitem{phdstrozecki}
Yann Strozecki.
\newblock {\em Enumeration complexity and matroid decomposition}.
\newblock PhD thesis, Universit\'e Paris Diderot - Paris 7, 2010.

\bibitem{strozecki2019enumeration}
Yann Strozecki.
\newblock Enumeration complexity.
\newblock {\em Bulletin of EATCS}, 1(129), 2019.

\bibitem{tiernan1970efficient}
James~C Tiernan.
\newblock An efficient search algorithm to find the elementary circuits of a
  graph.
\newblock {\em Communications of the ACM}, 13(12):722--726, 1970.

\bibitem{TGR22}
Nikolaos Tziavelis, Wolfgang Gatterbauer, and Mirek Riedewald.
\newblock Any-k algorithms for enumerating ranked answers to conjunctive
  queries, 2022.
\newblock URL: \url{https://arxiv.org/abs/2205.05649}, \href
  {http://dx.doi.org/10.48550/ARXIV.2205.05649}
  {\path{doi:10.48550/ARXIV.2205.05649}}.

\bibitem{uno1998new}
Takeaki Uno.
\newblock New approach for speeding up enumeration algorithms.
\newblock In {\em Algorithms and Computation: 9th International Symposium,
  ISAAC’98 Taejon, Korea, December 14--16, 1998 Proceedings 9}, pages
  287--296. Springer, 1998.

\bibitem{uno03}
Takeaki Uno.
\newblock Two general methods to reduce delay and change of enumeration
  algorithms.
\newblock Technical report, National Institute of Informatics, Tokyo, 2003.
\newblock URL:
  \url{https://www.nii.ac.jp/TechReports/public_html/03-004E.html}.

\bibitem{uno2015constant}
Takeaki Uno.
\newblock Constant time enumeration by amortization.
\newblock In {\em Workshop on Algorithms and Data Structures}, pages 593--605.
  Springer, 2015.

\bibitem{EnumOfEnum}
Kunihiro Wasa and Kazuhiro Kurita.
\newblock Enumeration of enumeration algorithms and its complexity.
\newblock \url{https://kunihirowasa.github.io/enum/problem_list}.
\newblock Accessed: 2021-10-31.

\end{thebibliography}



\end{document}